\def\te{\theta}
\def\obs{y}
\def\mis{z}
\def\1{\mathbf{1}}
\def\Id{\mathrm{Id}}
\newcommand{\Cl}{\text{Cl}}
\newcommand{\ka}{\text{ka}}
\newcommand{\Kp}{\mathbf{K_p}}
\newcommand{\Kg}{\mathbf{K_g}}
\def\I0{I_\xi}
\newcommand{\pr}{\mathbb{P}}
\newcommand{\prt}{\mathbb{P}_{\boldsymbol \theta}}
\newcommand{\E}{\mathbb{E}}
\newcommand{\var}{\mathrm{Var}}
\def\rme{\mathrm{e}}
\newcommand{\R}{\mathbb{R}}
\newcommand{\btheta}{\boldsymbol{\theta}}
\newcommand{\by}{\mathbf{y}}
\newcommand{\bz}{\mathbf{z}}
\newcommand{\be}{\mathbf{e}}
\newcommand{\bs}{\mathbf{s}}
\def\E{\mathrm{E}}
\def\R{\mathbb{R}}
\def\obs{y}
\def\mis{z}
\def\obs{y}
\newcommand{\Rset}[1]{\mbox{$\mathbb{R}^{#1}$}}
\newcommand{\Nset}{\mbox{$\mathbb{N}$}}
\newcommand{\Sr}{\mbox{$\mathcal{S}$}}
\newcommand{\htheta}{{\widehat{\boldsymbol \theta}}}
\newcommand{\pscal}[2] { \left< #1 , #2 \right> }
\newcommand{\Dt}[1]{\partial_{\boldsymbol \theta} #1}
\newcommand{\Fr}{\mbox{$\mathcal{F}$}}
\journalname{}
\begin{document}

\title{Properties of the Stochastic Approximation EM Algorithm with Mini-batch Sampling }



\author{Estelle Kuhn         \and
        Catherine Matias \and
        Tabea Rebafka
}


\institute{Estelle Kuhn \at
              MaIAGE, INRAE, Universit\'e Paris-Saclay\\ Jouy-en-Josas, France \\
              \email{estelle.kuhn@inrae.fr}           
           \and
           Catherine Matias \at
              Sorbonne Universit\'e, Universit\'e de Paris, CNRS\\
 Laboratoire de Probabilit\'es, \\Statistique    et    Mod\'elisation (LPSM)\\  Paris, France\\
               \email{catherine.matias@math.cnrs.fr}           
           \and
           Tabea Rebafka \at
              Sorbonne Universit\'e, Universit\'e de Paris, CNRS\\
 Laboratoire de Probabilit\'es, \\Statistique    et    Mod\'elisation (LPSM)\\  Paris, France\\
               \email{tabea.rebafka@upmc.fr}           
}

\date{Received: date / Accepted: date}

\maketitle

\begin{abstract}
To deal with very large datasets a mini-batch version of the Monte Carlo Markov Chain Stochastic Approximation Expectation-Maximization algorithm for general latent variable models is proposed. For   exponential models 
the algorithm   is shown to be convergent under classical
conditions as the number of iterations 
increases.
Numerical experiments illustrate the performance of the mini-batch algorithm in various models.
In particular, we highlight that   mini-batch sampling results in an important  speed-up of the convergence  of the sequence of estimators generated by the algorithm. 
Moreover, insights on the effect of the mini-batch size on the limit distribution are presented.  
Finally, we illustrate how to use mini-batch sampling in practice to improve results when a constraint on the  computing time is given.
\keywords{EM algorithm, mini-batch sampling, stochastic approximation, Monte Carlo Markov chain.}
 \subclass{65C60 \and 62F12}
\end{abstract}

\section{Introduction}
On very large datasets the computing time of the classical expectation-maximization (EM) algorithm \citep{DempsterLR}  
as well as its variants such as Monte Carlo EM, Stochastic Approximation EM, Monte Carlo Markov Chain-SAEM and others can be very long, since all data points are visited in every iteration. 
To circumvent this problem, a number of  EM-type algorithms have been proposed, namely various mini-batch
\citep{NealHinton99,LiangKlein2009,Belhal18,Forbes19} and online \citep{titterington84,Lange95,CappeMoulines2009,Cappe2011}
versions of the EM algorithm. They all consist in using only a part of the observations during one iteration in order to shorten computing time and 
accelerate convergence. While online algorithms process a single observation per iteration handled in the order of
arrival,  mini-batch algorithms use larger, randomly chosen subsets of observations. The size of these subsets of data
is generally called the mini-batch size.  
Choosing  large mini-batch sizes entails  long computing times, while very
small mini-batch sizes as well as online algorithms may result in a loss of  accuracy of the algorithm.  {
This raises the question about the optimal mini-batch size
 that  would achieve a compromise between accuracy and computing time. However this issue is generally overlooked.  }

In this article, we propose a mini-batch version of the 
MCMC-SAEM
algorithm \citep{delyon1999,kuhn2004}. 
The original MCMC-SAEM algorithm is a powerful alternative to EM when the E-step is intractable. 	 
	 This is particularly interesting  for nonlinear models or non-Gaussian models, where the unobserved data cannot be simulated exactly from the conditional distribution. 	 Moreover, the
	 MCMC-SAEM algorithm is also more computing efficient than the MCMC-EM algorithm, since only a single instance of the         latent variable is sampled at every iteration of the algorithm. 
Nevertheless, when the dimension of the latent variable is huge, the simulation step  as well as the update of the sufficient statistic can be time consuming. From this point of view the
here proposed mini-batch version is computationally more efficient than the original MCMC-SAEM,  since
at each iteration  only a small proportion of the latent variables is  simulated and only the corresponding data are
visited to update the parameter estimates. 
For 
exponential models, 
we prove  almost-sure convergence of the sequence of estimates generated by the  mini-batch MCMC-SAEM algorithm
under classical conditions as the number of iterations of the algorithm increases.
We also conjecture an asymptotic normality result and the relation between the limiting
  covariance and the mini-batch size. 
Moreover,  for various models we assess  via   numerical experiments   the influence of the mini-batch size on the speed-up of the convergence at the beginning of the algorithm
as well as its impact on the limit distribution of the estimates. Furthermore, we study the  computing time   of the algorithm and address the question of how to use mini-batch sampling in practice to improve results.

\section{Latent variable model   and   algorithm}
This section   introduces  the general latent variable model considered throughout this paper and  the original MCMC-SAEM algorithm. Then  the new  mini-batch version of the MCMC-SAEM algorithm is presented.

\subsection{Model and assumptions}
Consider the common  latent variable model with
incomplete (observed) data $\by$  and    latent (unobserved) variable $\bz$. 
{Denote $n$ the dimension of the   latent variable  $\bz =(z_1\dots,z_n)\in\R^n$.}
 In many   models,  $n$ also  corresponds to the number of observations,
 but  it is not necessary  that $\bz$ and $\by$ have the same size or that each observation $y_i$ depends only on a
 single latent component $z_i$, as it is  for instance the case in the stochastic block model, Section \ref{subsec:sbm}. 

 Denote     $\btheta\in \Theta \subset \R^d$ the model parameter of the joint distribution of the complete data~$(\by,\bz)$.
In what follows, 
omitting all dependencies in the observations $\by$, which are considered as fixed realizations in the analysis, we
assume that the complete-data likelihood function has the following form
\begin{equation*}
f(\bz;\btheta) = \exp\left\{-\psi(\btheta)+\pscal{S(\bz)}{\phi(\btheta)}
\right\} c(\bz),
\end{equation*}
where  $\pscal{\cdot}{\cdot}$ is the scalar product,  $S(\bz)$ denotes a vector of sufficient statistics of the model taking its values in some set $\mathcal S$ and $\psi$ and $\phi$ are   functions on $\btheta$.
The posterior distribution of the latent variables $\bz$ given the observations is denoted by $ \pi(\cdot
; \btheta)$.

\subsection{Description of MCMC-SAEM algorithm}

The original MCMC-SAEM algorithm proposed by \citet{kuhn2004}  is appropriate for models, where the classical EM-algorithm cannot be applied due to difficulties in the E-step. In particular, in those models the conditional expectation $\E_{\btheta_{k-1}}[S(\bz)]$ of the sufficient statistic  under the current parameter value $\btheta_{k-1}$ has no closed-form expression.
In the  MCMC-SAEM algorithm the quantity   $\E_{\btheta_{k-1}}[S(\bz)]$ is thus estimated by a stochastic approximation algorithm. This means that the classical E-step is replaced with a
simulation step using a MCMC procedure combined with a stochastic approximation step.
Here, we focus on a version where the MCMC part is  a Metropolis-Hastings-within-Gibbs algorithm~\citep{Robert_book}. 
More precisely, the $k$-th iteration of  the classical MCMC-SAEM algorithm consists of  the following three steps.

\subsubsection{Simulation step}
A new realization $\mathbf{z}_k$ of the latent variable is sampled   from an ergodic Markov transition kernel
$\Pi(\bz_{k-1}, \cdot |\btheta_{k-1})$, whose  stationary distribution  is     the posterior distribution $\pi(\cdot ; {\btheta_{k-1}})$. 
In practice, this simulation is done by performing one iteration of a  Metropolis-Hastings-within-Gibbs algorithm. 
That is, we consider a collection $(\Pi_i)_{1\le i\le n}$ of symmetric random walk Metropolis kernels defined on $\R^n$, where subscript $i$ indicates that $\Pi_i$ 
 acts only on the $i$-th coordinate, see \citet{Fortetal2003}. These kernels  are applied successively to update the components of $\bz$ one after the other.
More precisely,  
let $(\be_i)_{1\le i\le n}$ be the canonical basis of $\R^n$. Then, 
for each $i\in \{1,\dots,n\}$ starting from the $n$-vector $\bz = (z_1,\dots, z_n)$, the proposal in the direction of $\be_i$ is given by
$\bz + x \be_i$, where $x\in\R$ is sampled from a symmetric increment density $q_i$. This proposal is then accepted with
probability $\min\{1 , \pi(\bz + x \be_i ; {\btheta_{k-1}})/\pi(\bz ; {\btheta_{k-1}})\}.$

\subsubsection{Stochastic approximation step}
The approximation of the sufficient statistic  is updated by
\begin{equation}\label{eq:stochapprox}
\bs_k=(1-\gamma_k)\bs_{k-1} +\gamma_kS(\bz_k),
\end{equation}
where  $(\gamma_k)_{k\geq1}$ is a decreasing sequence of positive step-sizes  such that $\sum_{k=1}^\infty\gamma_k=\infty$ and $\sum_{k=1}^\infty\gamma_k^2<\infty$. 
That is, the current approximation $\bs_k$ of the sufficient statistic  is a weighted mean of its previous value $\bs_{k-1}$ and the value
of the sufficient statistic $S(\bz_k)$ evaluated on the current value of the simulated latent variable $\bz_k$.

\subsubsection{Maximization step}
The model parameter $\btheta$ is updated  by $\btheta_k=\hat \btheta(\bs_k)$
\begin{align*}
&\text{with } \hat \btheta(\bs)= \arg\max_{\btheta\in\btheta} L(\btheta;\bs),\\
&\text{ where }
 L(\btheta;\bs)=-\psi(\btheta)+\pscal{\bs}{ \phi(\btheta)}.
\end{align*}
Depending on the model the maximization problem may   have a closed-form solution or not. 

\subsection{Mini-batch  MCMC-SAEM algorithm}
 \label{sec:algomini}
 When the dimension $n$ of the latent variable $\bz$ is large, the simulation step  can be very time-consuming. Indeed,  simulating {\it all} latent components $z_i$  
at every iteration is costly in  time. Thus, according to the spirit of other mini-batch algorithms, updating only a
part of the latent components may speed up the computing time and also the
convergence of the algorithm. With this idea in mind, denote $0<\alpha\leq 1$  the average proportion  of components of the latent variable $\bz$ that are  updated  during one iteration.  

Furthermore, depending on the model, the evaluation of the sufficient statistic $S(\bz_k)$ on the current latent variable $\bz_k$ can be accelerated  as only a part of the components of $\bz_k$ have changed. This brings a further gain in computing time.

\subsubsection{Mini-batch simulation step}
In the  mini-batch version of the MCMC-SAEM algorithm the simulation step consists of two parts.
First,  select the indices of the components $z_i$ of the latent variable that will be updated. 
That is, we sample the number  $r_k$ of indices  from a binomial distribution
Bin$(n,\alpha)$ and then randomly select  $r_k$ indices among $\{1,\dots,n\}$ without replacement. Denote $\mathcal I_k$ this set of selected indices at iteration $k$. 
Second, instead of sampling all   components $z_{k,i}$, 
we only sample from the Metropolis kernels $\Pi_i$   for $i \in \mathcal I_k$ to update only 
 the components  $z_{k,i}$ with  index $i \in \mathcal I_k$.

 \subsubsection{Stochastic approximation  step}
Again this step consists  in  updating the sufficient statistic $\bs_k$ according to Equation \eqref{eq:stochapprox}. 
However, the naive evaluation of the sufficient statistic $S( \bz_k)$ generally involves all data  $\by$ and  thus is time-consuming on large datasets. Though, in most models it    is computationally much more efficient 
to derive the value of $S( \bz_k)$ from its previous value $S(\bz_{k-1})$ by correcting only for the terms that involve recently updated latent components. In general, this amounts to using only a small part of the data and thus speeds up computing.
An  example is detailed in Section \ref{subsec:sbm}.

\subsubsection{Maximization step}
This  step is  identical to the one in the original   algorithm. It does not depend on the mini-batch proportion $\alpha$ in any way: the formulae for the update of the parameter estimates are identical and the computing time of the M-step is the same for any $\alpha$.

\subsubsection{Initialization}
Initial values $\btheta_0$, $\bs_0$ and $\bz_0$ for the model parameter,  the sufficient statistic  and the latent variable, respectively,     have to be chosen  by the user  or at random.

\bigskip
See Algorithm~\ref{algominiSAEM} for a complete description of the algorithm.

\begin{algorithm}[t]
  \caption{Mini-batch MCMC-SAEM}
  \label{algominiSAEM}
\begin{algorithmic}
 \State {\bfseries Input:} data $\by$.
 \State {\bfseries Initialization:} Choose   initial  values $\btheta_0$,    $\bs_0$, $\bz_0$.
\State Set $k=1$.
  \While{not converged}
  \State  Sample $r_k\sim$ Bin($n,\alpha$).
  \State Sample $r_k$ indices from $\{1,\dots,n\}$, denoted by $\mathcal I_k$.
  \State Set 
  $\bz_k = \bz_{k-1}$
  \For{$i\in\mathcal I_k$}
  \State   Sample $ \bz \sim \Pi_i( \bz_k, \cdot |\btheta_{k-1})$. 
\State Set $\bz_k= \bz$. 
\EndFor
\State $\bs_k=(1-\gamma_k)\bs_{k-1} +\gamma_kS(\bz_k).$
 \State Update  parameter  
$\btheta_k=\hat \btheta(\bs_k). $
 \State Increment $k$.
  \EndWhile
\end{algorithmic}
\end{algorithm}

 \section{Convergence of the algorithm}
In this section we show that, under appropriate assumptions, the mini-batch MCMC-SAEM algorithm converges as the number
of iterations increases. 
Note that we consider convergence of the algorithm for a fixed dataset $\by$ when the number of iterations tends to
infinity, and not statistical convergence where the sample size grows.
Other convergence results for  mini-batch EM and SAEM algorithms appear recently in
 \cite{Forbes19} and  \citet[Chapter 7,][]{Karimi_phD}, respectively.
Basically, the assumptions are classical   conditions   
that ensure the convergence of the original   MCMC-SAEM algorithm. So roughly, our theorem says that if the batch
MCMC-SAEM algorithm converges, so does the mini-batch version for any mini-batch proportion $\alpha$.
Compared to the classical MCMC-SAEM algorithm, the mini-batch version involves an additional stochastic part that comes from
the selection of indices of the latent components that are to be updated. This additional randomness is governed by the value of the mini-batch proportion
$\alpha$. 

We also present arguments to explain the impact of the mini-batch proportion $\alpha$ onto the limit distribution of the sequence generated by the algorithm.

\subsection{Equivalent descriptions}
The above description of the simulation step is convenient for achieving maximal computing efficiency. We now focus on
a mathematically equivalent framework that underlines the fact that the mini-batch MCMC-SAEM algorithm 
  formally belongs to the family of  classical MCMC-SAEM  algorithms.

For two kernels $P_1$ and $P_2$, we denote their composition by 
  \[
P_2 \circ P_1 (\bz, \bz') = \int_{\tilde \bz } P_2 (\tilde \bz, \bz') P_1(\bz,d \tilde \bz).  
\]
With this notation at hand, the Metropolis-Hastings-within-Gibbs uses  the  kernel $\Pi= \Pi_n\circ\dots
\circ\Pi_1$. Now, to describe  the mini-batch simulation step in terms of a Markov kernel, we first introduce kernel $ \Pi_{\alpha,i}$, which is
  a mixture of the original kernel $\Pi_i$ and the identity kernel $\Id$,   defined as
\begin{align*}
 \Pi_{\alpha,i}& (\bz, \bz'|\btheta) =\\
 &   \alpha\Pi_i(\bz , (z_1,\dots, z'_i,\dots, z_n)|\btheta)+(1-\alpha)\Id (\bz,\bz').
 \end{align*}
Hence, the mini-batch simulation step corresponds to generating a latent vector $\bz$ according to 
the Markov kernel $\Pi_\alpha =\Pi_{\alpha,n} \circ \dots \circ \Pi_{\alpha,1}$. Indeed,  $ \Pi_\alpha$ can also be written as
\begin{align*} 
&\Pi_\alpha(\bz,\bz'|\btheta)= \\
&\sum_{k=0}^n \alpha^k (1-\alpha)^{n-k}\sum_{1\le i_1< \dots < i_k \leq n} 
(\Pi_{i_k} \circ \dots \circ \Pi_{i_1}) (\bz, \bz'|\btheta). 
\end{align*}
That means that $\Pi_\alpha$ is a mixture of compositions of the original kernels $\Pi_i$ and the identity kernel. 
In other words, the mini-batch MCMC-SAEM algorithm corresponds to the family of classical MCMC-SAEM algorithms with a particular choice of the transition kernel.
Note that $ \Pi_\alpha$ can also be interpreted as a mixture over different trajectories (the choice of indices $i_1<\dots<i_k$ to be
updated) of Metropolis-Hastings-within-Gibbs kernels acting on a part of the latent vector $\bz$.

We now give a third mathematically equivalent description of the simulation step, which will be appropriate  for the analysis of the  theoretical properties of the algorithm. 
In the $k$-th mini-batch simulation step,  we  sample for every $i\in \{1,\dots, n\}$ a  Bernoulli random variable $U_{k,i}$ with parameter $\alpha$. So $U_{k,i}$  indicates 
 whether the latent variable 
 $\bz_{k-1,i}$ is  updated at iteration $k$ or not. Next, we sample  a realization ${\tilde  \bz}_{k,i}$   from the transition kernel  $\Pi_i$ and set
\begin{equation}\label{eq:update_bik}
\bz_{k,i} = U_{k,i}{\tilde \bz}_{k,i}+(1-U_{k,i}) \bz_{k-1,i}.
\end{equation}
In particular, we see from this formula that,  for $\alpha=1$, the sequence 
$(\bz_k)_{k\ge1}$
generated by the batch algorithm is a Markov chain with transition kernel $\Pi=\Pi_n\circ\dots \circ\Pi_1$, what has  already been mentioned above. 
 
 \cite{Fortetal2003} establish results on the geometric ergodicity of hybrid samplers and in particular for the
random-scan Gibbs sampler. The latter is defined as $ n^{-1}\sum_{i=1}^n \Pi_i$, where each
$\Pi_i$ is a kernel on $\R^n$ acting only on the $i$-th component.  More generally, the random-scan Gibbs
  sampler may be defined as $\sum_{i=1}^n a_i \Pi_i$, where $(a_1,\dots,a_n)$ is a probability distribution. This means
  that at each step of the algorithm, only one component $i$ is drawn from the  probability distribution $(a_1,\dots,a_n)$  and then updated. These probabilities may be chosen uniformly ($a_i=1/n$) or, for example,  can be used to  favor  a component that is  more difficult to explore. 
We generalize the results of  \cite{Fortetal2003}  to a setup, where at each step $k$     kernel $\bar\Pi_\alpha$ is used  that is iterated from  a random-scan Gibbs  sampler $\widetilde \Pi_\alpha$ as follows 
\begin{align}\label{def:noyauPiavecUi}
\bar  \Pi_\alpha (\cdot,\cdot|\btheta_k) &=
\widetilde  \Pi_\alpha (\cdot,\cdot|\btheta_k)^{\sum_i U_{k,i}}, 
\end{align}
with
\begin{align*}
&\widetilde \Pi_\alpha (\cdot,\cdot|\btheta_k) =\\
&\left\{
    \begin{array}{ll}
      \left( \sum_{i=1}^n U_{k,i} \right)^{-1} \sum_{i=1}^n U_{k,i} \Pi_i (\cdot,\cdot|\btheta_k) & \text{ if } \sum_{i=1}^n  U_{k,i} \ge 1, \\
      \Id & \text{ else}                                                                
    \end{array}
    \right. 
\end{align*}
  Note that this is not exactly the kernel corresponding to the algorithm described above, as the same component $i$  can possibly be updated   more than once during the same iteration. Nonetheless, we neglect this effect and
  establish our result for the algorithm based on    kernel~$\bar \Pi_\alpha$.

 \subsection{Assumptions and convergence result}
Assume that the random variables $\bs_0, \bz_1, \bz_2, \dots $ are defined on the same probability space $(\Omega, \mathcal{A}, P)$. We denote $\Fr = \{ \Fr_k \}_{k \geq 0}$ the increasing family of
$\sigma$-algebras generated by the random variables $\bs_0, \bz_1, \bz_2, \dots, \bz_k$. Consider the following regularity conditions.

\begin{description}
\item[({\bf M1})]The parameter space $\btheta$ is an open subset of $\mathbb{R}^{d}$. The  complete-data likelihood
  function is given  by
\begin{equation*}
f(\bz;\btheta)
= \exp\left\{-\psi(\btheta) + \pscal{S(\bz)}{\phi(\btheta)}\right\} c(\bz),
\end{equation*}
where $S$ is a continuous function on $\R^n$   taking its values in an open subset $\Sr$
of $\mathbb{R}^{m}$. Moreover, the convex hull of $S(\mathbb{R}^{n})$  is included in $\Sr$, and for all $\btheta \in \btheta$, 
$$\int |S(\bz)|  \pi(\bz ; \btheta) d\bz < \infty.$$
\item[({\bf M2})] 
The functions $\psi$ and $\phi$ are twice
continuously differentiable on $\btheta$.
\item[({\bf M3})]
The function $\bar{s} : \btheta \rightarrow \Sr$ defined as
$$
\bar{\bs}(\btheta) = \int S(\bz) \pi(\bz; \btheta) d\bz,
$$
is continuously differentiable on $\btheta$.
\item[({\bf M4})] The observed-data log-likelihood function $\ell : \btheta \rightarrow \Rset{}$   defined as  
$$
\ell(\btheta) =  \log \int f(\bz;\btheta) d\bz,
$$
is continuously differentiable on $\btheta$ and
$$
\Dt \int f(\bz; \btheta) d\bz= \int \Dt f(\bz; \btheta)  d\bz.
$$
\item[({\bf M5})] Define $L : \Sr \times \btheta \to \Rset{}$ as $
L(\bs; \btheta)= - \psi(\btheta) + \pscal{\bs}{\phi(\btheta)}.$ 
  There exists a continuously differentiable function
$\htheta : \ \Sr \rightarrow \btheta$, such that, for any $\bs \in \Sr$ and any $\btheta \in \btheta$,
\begin{equation*}
L(\bs; \htheta(\bs))\geq L(\bs; \btheta).
\end{equation*}
\end{description}
 
We now introduce the usual  conditions that ensure convergence of the SAEM procedure. 
\begin{description}
\item[({\bf SAEM1})]
For all $k \in\Nset$, $\gamma_k \in [0,1]$,
$\sum_{k=1}^\infty \gamma_k = \infty$ and  $\sum_{k=1}^\infty \gamma_k^2 <
\infty$.
\item[({\bf SAEM2})]
 The functions  $\ell  : \btheta  \rightarrow  \Rset{}$ and  $\htheta :  \Sr
  \rightarrow \btheta$ are $m$ times differentiable, where we recall that $\Sr$ is an open subset of $\Rset{m}$.
\end{description}
 
 For any $\bs \in \Sr$, we define $H_{\bs}(\bz) = S( \bz)-\bs$ and its expectation with respect to the posterior
 distribution $\pi (\cdot ; \hat \btheta(\bs)) $ denoted by $h(\bs)=\E_{\hat\btheta(\bs)}[S(\bz)  ]-\bs$. For any
 $\rho>0,$ denote $V_\rho(\bz)= \sup_{\btheta\in\btheta}[\pi(\bz ; \btheta)]^\rho$. We consider the following additional
 assumptions as done in \cite{Fortetal2003}.

\begin{description}
\item[({\bf H1})]
There exists a constant $M_0$ such that
\begin{align*}
\mathcal L&=
\left\{\bs\in\mathcal S, \langle \nabla \ell(\hat\btheta(\bs)), h(\bs)\rangle=0
\right\}\\
&\subset
\{\bs\in\mathcal S, -\ell(\hat\btheta(\bs))<M_0
\}.
\end{align*}
In addition, there exist  $M_1\in(M_0,\infty]$ such that
$\{\bs\in\mathcal S, -\ell(\hat\btheta(\bs)) \leq M_1
\}$
is a compact set. 
\item[({\bf H2})]
 The family $\{q_i\}_{1\le i \le n}$ of symmetric densities   is such that, for $i = 1,\dots,n$, there exist   constants $\eta_i > 0$ and $\delta_i <\infty$   such that $q_i (x) > \eta_i$ for all $|x|< \delta_i$.
\item[({\bf H3})] There are constants $\delta$ and $\Delta$ with $0\le\delta\le\Delta\le\infty$ such that 
\begin{align*}
\inf_{i=1,\dots,n}\int_\delta^\Delta q_i(x)dx>0,
\end{align*}
and for any sequence $\{\bz^j\}$ with $\lim_{j\to\infty}|\bz^j|=\infty$,  a subsequence $\{\check \bz^j\}$ can be extracted with the property that, for some $i\in\{1,\dots,n\}$,  for any  $x\in[\delta,\Delta]$ and any $\btheta \in \btheta$,
\begin{align*}
&\lim_{j\to\infty} \frac{\pi(\check \bz^j;\btheta)}{\pi(\check \bz^j-{\rm sign}(z^j_i)x\be_i;\btheta)}=0\quad\text{ and }\\ 
&\lim_{j\to\infty} \frac{\pi(\check \bz^j+{\rm sign}(z^j_i)x\be_i;\btheta)}{\pi(\check \bz^j;\btheta)}=0.
\end{align*}
\item[({\bf H4})] There exist $C>1$, $\rho\in(0,1)$  and $\btheta_0 \in \btheta$ such that, for all $\bz\in\R^n$,
\begin{align*}
|S(\bz)|\le C \pi(\bz;\btheta_0)^{-\rho}.
\end{align*}

\end{description}
To state our convergence result,  we consider the version of the algorithm with truncation on random boundaries studied by
\citet{Andrieu2005}. This additional projection step ensures in particular the stability of the algorithm for the
theoretical analysis and is only a technical tool for the proof without any practical consequences.

 \begin{theorem}\label{thm:convAlgo}
  Assume that the conditions
  {\rm({\bf M1})--({\bf M5})},   {\rm({\bf SAEM1}),}   {\rm({\bf SAEM2})} and   {\rm({\bf H1})--({\bf H4})} hold.
Let  $0<\alpha \leq 1$ and  $(\btheta_k)_{k\ge1}$ be a sequence generated by the mini-batch MCMC-SAEM algorithm with
corresponding Markov kernel  $\bar\Pi_\alpha (\cdot,\cdot|\btheta)$ and truncation on random boundaries
  as in \citet{Andrieu2005}. Then, almost surely,
$$\lim_{k\to\infty}d\left(\btheta_k, \{\btheta, \nabla \ell(\btheta)=0\}\right)=0,$$
where $d(x,A)=\inf\{y\in A,|x-y|\}$, that is, $(\btheta_k)_{k\ge1}$ converges to 
   the set  of critical points of the observed likelihood $\ell(\btheta)$ as the number of iterations increases.
\end{theorem}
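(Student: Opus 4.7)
The approach is to reduce the claim to the general convergence theorem for MCMC-SAEM with random truncation of \citet{Andrieu2005}, exploiting the equivalent description of Section~3.1: each iteration of the mini-batch MCMC-SAEM coincides with a classical MCMC-SAEM iteration whose Metropolis-Hastings-within-Gibbs kernel is replaced by $\bar\Pi_\alpha(\cdot,\cdot|\btheta)$ defined in~\eqref{def:noyauPiavecUi}. It therefore suffices to verify that all the assumptions of \citet{Andrieu2005} are met for this particular choice of kernel.

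The first block of verifications is essentially routine and mimics the case $\alpha=1$ treated in \citet{kuhn2004}. Assumptions (M1)--(M5) provide the exponential-family structure and the smooth update map $\htheta:\Sr\to\btheta$ required to rewrite the procedure as a stochastic approximation on the sufficient-statistics space; (SAEM1) supplies the Robbins--Monro step-size regime and (SAEM2) the smoothness of $\ell$ and $\htheta$ used in the Taylor expansions along the associated mean-field ODE. Assumption (H1) plays the role of a Lyapunov condition for the mean field $h(\bs)=\E_{\htheta(\bs)}[S(\bz)]-\bs$, with $-\ell\circ\htheta$ as Lyapunov function, while (H4) gives the polynomial control $|S(\bz)|\le C\,\pi(\bz;\btheta_0)^{-\rho}$ that dominates $S(\bz_k)$ by the drift function $V_\rho$ and ensures the relevant moment bounds.

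The main obstacle is to establish that $\{\bar\Pi_\alpha(\cdot,\cdot|\btheta)\}_{\btheta}$ is geometrically ergodic toward $\pi(\cdot;\btheta)$ uniformly in $\btheta$ over compact sets, together with a drift inequality for $V_\rho$, since \citet{Fortetal2003} only treat the deterministic scan and the plain random-scan samplers. My plan is to extend their argument to $\bar\Pi_\alpha=\widetilde\Pi_\alpha^{\sum_i U_{k,i}}$ in three stages. Conditionally on $\{\sum_i U_{k,i}\ge1\}$, which has probability at least $1-(1-\alpha)^n>0$, $\widetilde\Pi_\alpha$ is a random-scan Gibbs sampler built from the $\Pi_i$, so (H2)--(H3) provide a drift and a minorization for $\widetilde\Pi_\alpha$ whose constants can be chosen uniformly in $\alpha$; composing such a kernel $\sum_i U_{k,i}$ times can only reinforce the drift, while on the complementary event $\{\sum_i U_{k,i}=0\}$ the identity kernel preserves $V_\rho$ exactly; and integrating against the binomial law of $(U_{k,i})_i$ maintains $\pi(\cdot;\btheta)$-invariance and yields the desired geometric drift for $\bar\Pi_\alpha$. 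The delicate point is the uniformity of the minorization and drift constants with respect to both $\btheta$ and the random scan, since the randomness of $\widetilde\Pi_\alpha$ couples the drift analysis with the index-selection mechanism. Once these properties are in hand, the random-boundary truncation of \citet{Andrieu2005} ensures stability of the iterates without altering the limit set, and their convergence theorem applied to the mean field $h$ delivers $d(\btheta_k,\{\btheta:\nabla\ell(\btheta)=0\})\to 0$ almost surely.
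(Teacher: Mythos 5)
Your proposal follows essentially the same route as the paper's proof: both reduce the statement to the truncated stochastic-approximation convergence theorem of \citet{Andrieu2005} applied to the kernel $\bar\Pi_\alpha$, obtain their condition (A1) and the step-size condition (A4) from ({\bf M1})--({\bf M5}), ({\bf SAEM2}), ({\bf H1}) and ({\bf SAEM1}), and establish (A2)--(A3) by extending the drift and minorization analysis of \citet{Fortetal2003} to the random-scan mixture kernel driven by the Bernoulli variables $U_{k,i}$, with the drift function $\pi(\cdot;\btheta_0)^{-\rho}$ supplied by ({\bf H4}). The only pieces you leave implicit are the regularity of the kernel family in $\btheta$ (condition (DRI3) of \citet{Andrieu2005}, which the paper handles via the Lipschitz continuity of $\bs\mapsto\htheta(\bs)$ on compacts and a decomposition into acceptance and rejection regions) and the final transfer from the convergence of $(\bs_k)_{k\ge1}$ to that of $(\btheta_k)_{k\ge1}$ through the continuity of $\htheta$ and Lemma 2 of \citet{delyon1999}.
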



\begin{proof}
  The proof consists of  two steps. First, we prove the convergence of the sequence of sufficient statistics $(\bs_k)_{k\ge1}$ towards the set of zeros of  function $h$
  using Theorem 5.5 in \cite{Andrieu2005}. Second, following  the usual reasoning for EM-type
  algorithms, described for instance in \cite{delyon1999}, we deduce that the sequence $(\btheta_k)_{k\ge1}$ converges to the set of  critical
  points of the observed data log-likelihood $\ell$.

  \paragraph*{First step.}~ In order to apply Theorem 5.5 in \cite{Andrieu2005}, we need to establish that their
  conditions (A1) to (A4) are satisfied. In what follows, (A1) to (A4)   refer to the conditions stated in
  \cite{Andrieu2005}. 
 First, note that under our assumptions 
 ({\bf H1}), ({\bf M1})--({\bf M5}) and ({\bf SAEM2}), condition (A1) is satisfied.
Indeed, this is a consequence of Lemma 2 in \citet{delyon1999}.
To establish \rm{(A2)} and \rm{(A3)}, as suggested in \citet{Andrieu2005}, we establish their drift conditions
(DRI), see Proposition 6.1 in \citet{Andrieu2005}. We first focus on establishing (DRI1) in \citet{Andrieu2005}.
To this aim, we rely on \cite{Fortetal2003} that establish results for the random-scan Metropolis sampler. In their context, they
consider a sampler $\Pi = n^{-1}\sum_{i=1}^n \Pi_i$.
We generalize their results to our setup according to \eqref{def:noyauPiavecUi}.
Following the lines of the proof of Theorem 2 in \cite{Fortetal2003}, we can show that Equations (6.1) and (6.3) appearing in the drift condition (DRI1) in \citet{Andrieu2005} are satisfied when  
({\bf H2})--({\bf H3}) hold. Indeed following the strategy developed in  \cite{alla2010}, we first establish Equations (6.1) and (6.3) using a drift function depending on $\btheta$, namely $V_{\btheta}(\bz)=\pi(\bz;\btheta)^{-\rho}$, where $\rho$ is given by ({\bf H4}). Then we define the common drift function $V$ as follows. Let $\btheta_0 \in \btheta$ and $\rho$ be given in ({\bf H4}) and define $V(\bz)=\pi(\bz;\btheta_0)^{-\rho}$. Then for any compact $\mathcal{K}\in \btheta $, there exist two positive constants $c_{\mathcal{K}} $ and $C_{\mathcal{K}}$ such that for all $\btheta \in \mathcal{K}$ and for all $\bz$, we get $c_{\mathcal{K}} V(\bz) \leq \pi(\bz;\btheta)^{-\rho} \le C_{\mathcal{K}} V(\bz)$. We  then establish Equations (6.1) and (6.3) for this drift function $V$. Moreover, using Proposition 1 and Proposition 2 in \cite{Fortetal2003} we obtain that Equation (6.2) in  (DRI1)  from \citet{Andrieu2005} holds. 
Under assumption ({\bf H4}) we have the first part of (DRI2) in \citet{Andrieu2005}. The second part  is true  in our case with $\beta=1$.
Finally,
 (DRI3) in \citet{Andrieu2005} holds in our context with $\beta=1$, since $\bs\mapsto \hat\btheta(\bs)$ is twice continuously differentiable and thus Lipschitz on any compact set. To prove this, we decompose the space in an acceptance region and a rejection region and  consider the integral over four sets leading to   different expressions of the acceptance ratio \citep[see, for example, the proof of  Lemma 4.7 in][]{fortetal2015}.
 This implies that (DRI) and therefore (A2)--(A3) in  \citet{Andrieu2005} are satisfied. 
Notice that  ({\bf SAEM1})   ensures (A4). This concludes the first step of the proof.
\paragraph*{Second step.} ~
   As the function $\bs \mapsto \hat\btheta(\bs)$ is continuous, the second step is immediate by applying Lemma 2 in \cite{delyon1999}.
\end{proof}
 \vfill

\subsection{Limit distribution}\label{subsec:limitvar}

The theoretical study of the impact of the  mini-batch proportion $\alpha$ on the limit distribution of  the sequence $(\btheta_k)_{k\ge1}$ generated by the mini-batch MCMC-SAEM algorithm 
is involved, and here we only present  some heuristic arguments. 
We conjecture that, under reasonable assumptions,   $(\btheta_k)_{k\ge1}$   is asymptotically normal
at rate   $1/\sqrt{k}$ and the limiting covariance matrix, say $V_\alpha$, depends  on the mini-batch proportion $\alpha$ in the following form 
\begin{equation*}
V_\alpha = \frac{2-\alpha}\alpha V_1,
\end{equation*}
where $V_1$ denotes the limiting covariance of the batch algorithm.
This formula is coherent with the expected  behavior with respect to the mini-batch proportion. Namely,  the limit variance is monotone in $\alpha$, for $\alpha=1$ we recover the limit variance of the batch algorithm, and, when $\alpha$ vanishes, the limit variance tends to infinity. 
 Numerical experiments in Sections~\ref{subsec:sbm} and~\ref{subsec:frailty} 
 support this conjecture.

The general approach to establish asymptotic normality of $(\btheta_k)_{k\ge1}$  consists in establishing asymptotic normality of the sequence of sufficient statistics $(\bs_k)_{k\ge1}$ and then applying the delta method. 
Now consider the simple case where the model has a single latent  component, that is, $n=1$.
We rewrite Equation~\eqref{eq:stochapprox} as  
\begin{equation*}
\bs_k=\bs_{k-1}+  \gamma_kh(\bs_k)+  \gamma_k \eta_k,
\end{equation*}
where $ \eta_k= S(\bz_k)-\E_{\btheta_{k-1}}[S(\bz)]$, and note that $S(\bz_{k}) = U_{k}S({\tilde \bz}_{k})+(1-U_{k}) S(\bz_{k-1})$. 
 In general, the principal contribution to  the limit variance comes
from the term
\begin{align} 
\frac1{\sqrt k}\sum_{l=1}^k\eta_l
&=\frac1{\sqrt k}\sum_{l=1}^k  N_{l,k} \left(S(\mathbf{\tilde z}_l)-\E_{\btheta_{l-1}}[S(\mathbf z)]\right) \nonumber\\
&\quad+\frac1{\sqrt k}\sum_{l=1}^k( N_{l,k}-1) \E_{\btheta_{l-1}}[S(\mathbf z)], \label{eq_etak}
\end{align} 
where
\begin{align*} 
 N_{l,k} &=
U_l\sum_{j=l}^k\left\{\prod_{m=1}^{j}(1-U_{l+m})\right\},
\end{align*}
The quantity  $N_{l,k}$ equals zero if there is no update of the unique latent component at iteration $l$. Otherwise, $N_{l,k}$ is the  number of iterations until the next update after the one at iteration $l$.
 The random variable  $N_{l,k}$  takes its values in $\{0,1,\dots, k-l+1\}$ and it can be shown that 
\begin{align*}
\pr(N_{l,k}=m)&=\begin{cases}
1-\alpha,&\quad m=0\\
\alpha^2(1-\alpha)^{m-1},&\quad m=1,\dots,k-l\\
\alpha(1-\alpha)^{k-l},&\quad m=k-l+1
\end{cases}. 
\end{align*} 
Another important property is that 
$\sum_{l=1}^kN_{l,k}=k~a.s.$

It can be shown that the second term in the right-hand side of~\eqref{eq_etak} tends to zero in probability when $k$
goes to infinity. To analyze the first term, we consider the simple setting  where for all
  $k\ge1$, $\btheta_k=\btheta^*$, where $\btheta^*$ is  constant, as e.g. a critical point of the observed likelihood, and  $\tilde\bz_k$  are simulated from the conditional distribution $\pi(\cdot; \btheta^*)$.  In this case, conditionally to the Bernoulli indicators of updates $(U_{k})_{k\ge1}$, the central limit theorem can be applied to the first term. For the  conditional variance of the first term in Equation~\eqref{eq_etak} we obtain 
\begin{align*} 
&\var\left(\frac1{\sqrt k}\sum_{l=1}^k  N_{l,k} \left(S(\mathbf{\tilde z}_l)-\E_{\btheta_{l-1}}[S(\mathbf z)]\right)  \middle| (U_{k})_{k\ge1}\right)\\
&=\frac1{\sqrt k}\sum_{l=1}^k  N_{l,k}^2 V_1,
\end{align*} 
 where $V_1$ is the variance of $S(\mathbf{z})$ with  $\bz\sim\pi(\cdot;
 \btheta^*)$. 
 Further computations yield that the expectation taken over $(U_{k})_{k\ge1}$ gives  
 \begin{align*} 
 \lim_{k\to\infty} \E\left[\frac1{\sqrt k}\sum_{l=1}^k  N_{l,k}^2\right]
 = \frac{2-\alpha}\alpha.
 \end{align*}
 The main difficulty for generalizing this approach to $n>1$ arises from   two facts. First,  the components of $\bz_k$  are not updated  simultaneously, but at different iterations.  Second, the sufficient statistic may not  be linear in $\bz$.  More precisely, when $\bz$ is a vector, $N_{l,k}$ is also a vector and an equivalent of \eqref{eq_etak} is not immediate.

\section{Numerical experiments}

We carry out various numerical experiments  in  a nonlinear mixed effects model,  a Bayesian deformable template model,
the stochastic block model and a frailty model, to illustrate  the performance and the properties of the proposed
mini-batch MCMC-SAEM algorithm and the potential gain in efficiency and accuracy. The programs of these experiments
are available at \\
www.lpsm.paris/pageperso/rebafka/MiniBatchSAEM.tgz

\subsection{Nonlinear mixed model for pharmacokinetic study}
 
Clinical pharmacokinetic studies aim at  analyzing
  the  evolution of the concentration of a given drug in the blood of an individual over a given time interval after absorbing the drug. In this section     a classical one-compartment  model is  considered.  
  
\subsubsection{Model}
The model presented in \citet{Davidian95} serves to  analyze the kinetic of  the drug  theophylline used in therapy for respiratory diseases. For $i=1,\ldots,n$ and  $j=1,\ldots, J$, we define 
\begin{equation*}
\obs_{ij}=h(V_i, \ka_{i},\Cl_{i}) + \varepsilon_{ij}
\end{equation*}
with
\begin{equation*}
h(V_i, \ka_{i},\Cl_{i})=\frac{d_i \ka_{i}}{V_i \ka_{i}-\Cl_{i}}\left[\rme^{-\Cl_i  t_{ij} /V_i} - \rme^{-\ka_{i} t_{ij}}\right] 
\end{equation*}
where the observation $\obs_{ij}$ is the measure of drug concentration on individual~$i$ at time $t_{ij}$.
The drug dose administered to individual~$i$ is denoted $d_i$.
The parameters  for individual~$i$ are
the volume $V_i$ of the central compartment, 
the constant  $\ka_i$ of the drug absorption rate, and
the drug's clearance $\Cl_i$.
The random measurement error is denoted by
$\varepsilon_{ij}$ and supposed to have a centered normal distribution with variance $\sigma^2$. For the individual parameters $V_i $,  $\ka_i$ and $\Cl_i$   log-normal distributions are considered given by
\begin{align*}
 \log V_{i} & =  \log(\mu_V) + \mis_{i,1},\\
  \log \ka_{i} & =  \log(\mu_{\ka}) + \mis_{i,2},\\
  \log \Cl_{i} & =  \log(\mu_{\Cl}) + \mis_{i,3}, 
\end{align*}
where $\boldsymbol z_i=(\mis_{i,1},\mis_{i,2},\mis_{i,3})$ are independent latent random variables following a centered normal distribution with variance
$\Omega = \mathrm{diag}(\omega_{V}^2,\omega_{\ka}^2,\omega_{\Cl}^2)$. 
Then the model parameters are
$\boldsymbol\te=(\mu_V,\mu_{\ka},\mu_{\Cl},\omega_{V}^2,\omega_{\ka}^2,\omega_{\Cl}^2,\sigma^2)$.

\subsubsection{Algorithm} 
We implement the minibatch MCMC-SAEM algorithm presented in Section \ref{sec:algomini}.
In the simulation step we use the following sampling procedure.
 Let $\mathcal I_k$ be the subset of indices of  latent variable components $z_i$ that have  to be updated at iteration $k$. For each  $i\in\mathcal I_k$, we use a Metropolis-Hastings procedure: first, for $l \in \{1,2,3\}$ draw a candidate $\tilde z_{k,i,l}$ from the normal distribution $\mathcal N(z_{k-1,i,l},\eta_l)$, with $\eta_{1}=0.01$, $\eta_{2}=0.02$ and $\eta_{3}=0.03$, chosen to get good mean acceptance rates. Then compute the acceptance ratio  $\rho_{k,i,l}=\rho(z_{k-1,i,l},\tilde z_{k,i,l})$ of the usual Metropolis-Hastings procedure. Finally, draw a realization $\omega_{k,i,l}$  of the   uniform distribution $U[0,1]$ and  set  
$z_{k,i,l}=\tilde z_{k,i,l}$ if $\omega_{k,i,l}<\rho_{k,i,l}$, and $z_{k,i,l}=z_{k-1,i,l}$ otherwise.

In the next  step, we compute the  stochastic approximation  of  sufficient statistics of the model
taking value in $\mathbb{R}^7$ according to 
\begin{equation*}
\boldsymbol s_{k}=(1-\gamma_k)\boldsymbol s_{k-1}+\gamma_k S(\bz_k),
\end{equation*}
where 
\begin{eqnarray*}
S(\bz_k)=\left(\frac1{n}\sum_i  \mis_{i,1}, \frac1{n}\sum_i  \mis_{i,2}, \frac1{n}\sum_i  \mis_{i,3}, \frac1{n}\sum_i  \mis_{i,1}^2, \right.\\
\left.  \frac1{n}\sum_i  \mis_{i,2}^2,  \frac1{n}\sum_i  \mis_{i,3}^2, \frac1{nJ}\sum_{i,j} \left(\obs_{ij}-
h(V_i, \ka_{i},\Cl_{i})  \right)^2 \right)
\end{eqnarray*}
 and where  the sequence $(\gamma_k)_{k\geq1}$ is chosen such that 
 $0< \gamma_k < 1$ for all $k$, $\sum \gamma_k= \infty$ and $\sum \gamma_k^2 < \infty$.

Finally the maximization step  is performed using explicit  solutions  given by the following equations
   \begin{align*}
    \mu_{V,k}&=\exp(s_{k,1}) ; \qquad   &\omega_{V,k}^2= s_{k,4} - s_{k,1}^2 ;\\
\mu_{\ka,k}&=\exp(s_{k,2}) ; \qquad  &\omega_{\ka,k}^2= s_{k,5} - s_{k,2}^2 ;\\
\mu_{\Cl,k}&=\exp(s_{k,3}) ; \qquad &\omega_{\Cl,k}^2= s_{k,6} - s_{k,3}^2;\\
\sigma_k^2&=s_{k,7} .& 
   \end{align*}

 For more technical details on the implementation, we refer to \citep{kuhn2005}.

\subsubsection{Numerical results}
\begin{figure}[t]
\begin{center}
\includegraphics[width=0.5\textwidth]{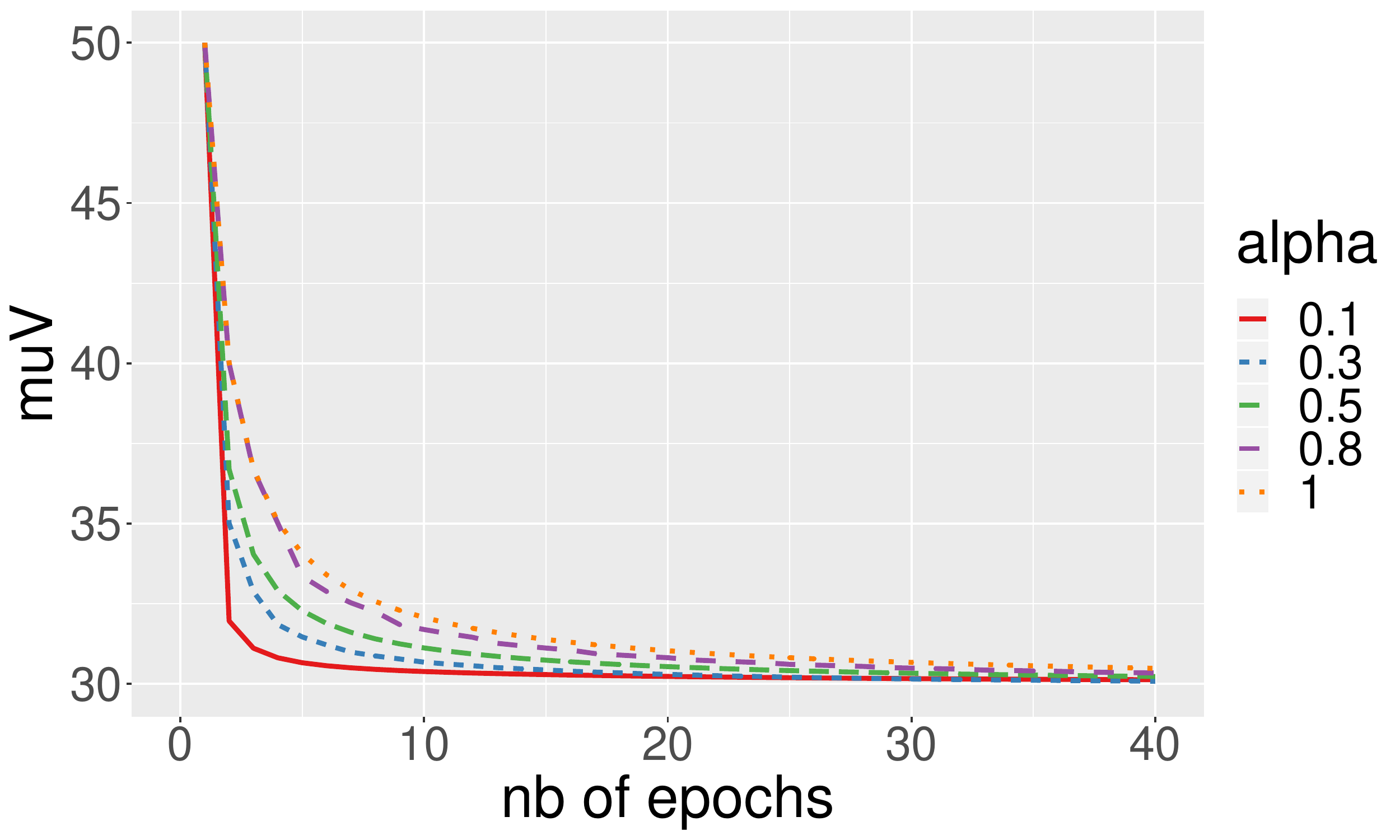} 
\caption{Estimates of the parameter $\mu_V$  using mini-batch MCMC-SAEM  with $\alpha \in \{0.1,0.3,0.5,0.8,1\}$ as a
  function of the number of epochs.}\label{fig:Theomoy}
\end{center}
\end{figure}

\begin{figure}[t]
\begin{center}
\includegraphics[width=0.5\textwidth]{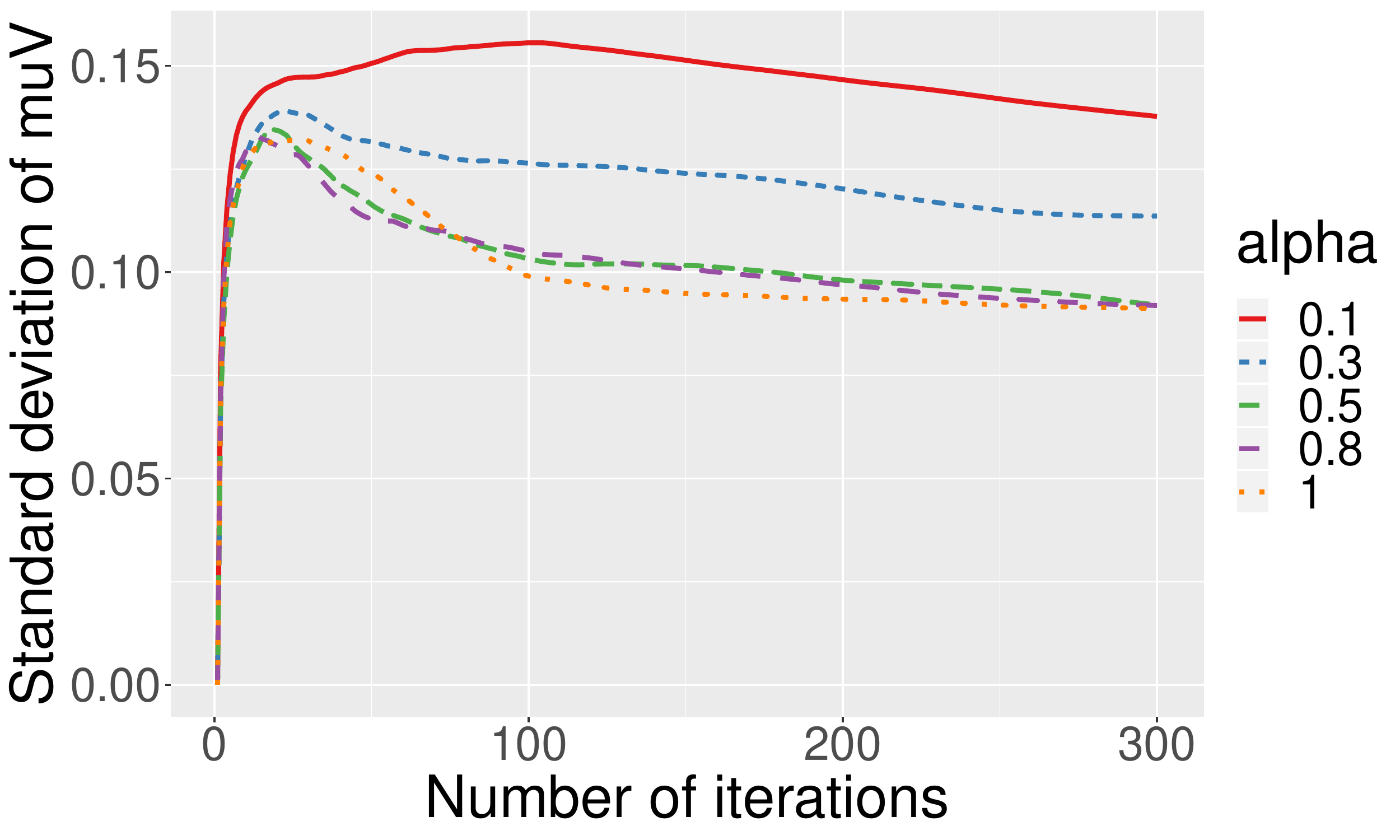} 
\caption{Sample standard deviation of  the estimate of parameter  $\mu_V$  using mini-batch MCMC-SAEM  with  $\alpha \in \{0.1,0.3,0.5,0.8,1\}$ as a function of the number of iterations.}\label{fig:Theosd}
\end{center}
\end{figure}

In a simulation study 
we generate one dataset from the above model with the following parameter values
$n=1000, J=10,  \mu_V=30, \mu_{\ka}=1.8, \mu_{\Cl}=3.5, \omega_{V}=0.02, \omega_{\ka}=0.04, \omega_{\Cl}=0.06, \sigma^2=2$. 
For all individuals $i$ the same dose $d_i=320$ and the    time points  $t_{ij}=j$ are used. 

Then we estimate the model parameters by
both the  original MCMC-SAEM algorithm and our  mini-batch version. The step sizes are  set to
$\gamma_k=1$ for $1 \leq k \leq 50$ and $\gamma_k=(k-50)^{-0.6}$ otherwise. 
This corresponds to a classical choice ensuring Assumption \textbf{(SAEM1)}  \citep{delyon1999}. The
  algorithm did not seem to be sensitive to this choice. 
Several mini-batch proportions, namely  $\alpha \in \{0.1,0.3,0.5,0.8,1\}$, are considered. 
To be more precise,
  the estimation task is repeated $100$ times on the same fixed dataset for all considered algorithms. 
The results for   parameter $\mu_V$ are
shown in Figures \ref{fig:Theomoy} and \ref{fig:Theosd}. The  results for the other parameters are very similar and therefore omitted.

Figure \ref{fig:Theomoy} shows the evolution of the precision of the  running mean of the estimates $\bar\mu_{V,k}=\sum_{l=1}^k\mu_{V,l}/k$ of parameter component  $\mu_V$ as a function
of the number of epochs for different values of the proportion $\alpha$. 
An epoch is   the average number of iterations required to update $n$ latent components.
That is, one epoch corresponds in average to $1/\alpha$ iterations of the mini-batch algorithm with  proportion $\alpha$. 
This means that  parameter estimators are compared when the different algorithms have spent approximately the same time in the simulation step, and, due to the dependency structure of  the nonlinear mixed model, when the algorithms have
 visited  approximately the same amount of data. 
That is, an epoch takes approximately the same computing time for any proportion $\alpha$. So  
 Figure \ref{fig:Theomoy}  compares estimates at comparable computing times. It is obvious that for all algorithms estimation improves when the number of epochs increases. 
Moreover, and more importantly,  the rate of convergence depends  on the mini-batch proportion:  the smaller $\alpha$, the faster the convergence       to the target value $\mu_V=30$.
Here, the fastest convergence is obtained with the smallest mini-batch proportion, that is, $\alpha=0.1$. 
For instance, to attain the precision obtained within 5 epochs with $\alpha=0.1$, we need at least 25 epochs with the batch algorithm $\alpha=1$.

This acceleration of convergence at the beginning of the algorithm induced by mini-batch sampling is characteristic for  mini-batch sampling in any EM-type algorithm. Let us give an intuitive explanation of this characteristic phenomenon. In general, the initial values of the algorithm are far away from the unknown target values. So,  during the first iteration of the batch algorithm, many time-consuming computations are done using a very bad value $\btheta_0$. Only at the very end of the first iteration, the parameter estimate is updated to a slightly better value $\btheta_1$. During the same time, a mini-batch algorithm with small $\alpha$ performs  some computations with the same bad value $\btheta_0$, but after a short time already, the M-step is attained for the first time. As only a couple of latent components have been updated and only a few data points have been visited, the new value $\btheta_1$ may be only a very slight correction of   $\btheta_0$, but, nevertheless,  it is a move into the right direction and the next iteration is performed using a slightly better value than in the previous one. Hence, performing mini-batch sampling consists in making many small  updates of the $\btheta$, while in the same time the batch algorithm only makes very few updates. Metaphorically speaking,  the batch algorithm makes  long and time-consuming steps, but these steps are not necessarily directed into the best direction, whereas  the mini-batch version makes plenty small and quick steps, correcting its direction after every step. As a whole, the mini-batch strategy leads to much better results as illustrated in Figure \ref{fig:Theomoy}.

Figure \ref{fig:Theosd} presents   for
different values of the proportion $\alpha$ the estimates of the empirical  standard deviation with respect to the
number of iterations. We observe that as the number of iterations increases, the standard deviations are lower than for higher values of  $\alpha$. This illustrates in particular that  including more data in the inference task leads to
more accurate  estimation results. This is indeed very intuitive. 
Therefore an  optimal choice of $\alpha$ should
  achieve a trade-off between speeding up the convergence and involving   enough data  in the  process to   get  accurate estimates.

\subsection{Deformable template model for image analysis}\label{subsec:imageanalysis}

\begin{figure}[t]
\begin{center}
\includegraphics[width=0.5\textwidth]{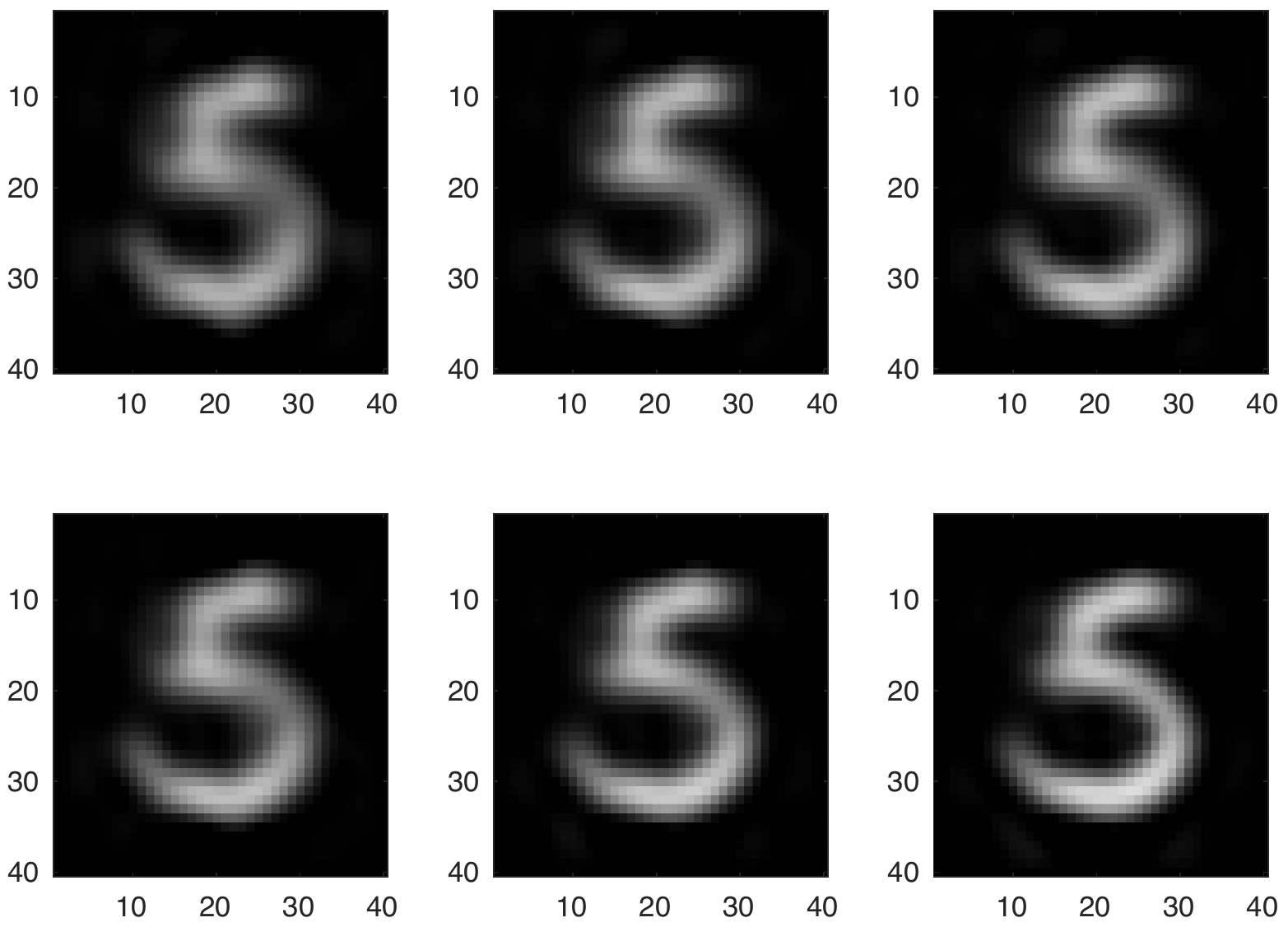} 
\caption{Estimation of the template with $n=20$ images. First row: using batch MCMC-SAEM ; second row: using mini-batch MCMC-SAEM with
  $\alpha=0.1$ ; columns correspond  to 1, 2 and 3  epochs, respectively. }\label{template5}
\end{center}
\end{figure}
\begin{figure*}
\begin{center}
\includegraphics[width=.9\textwidth]{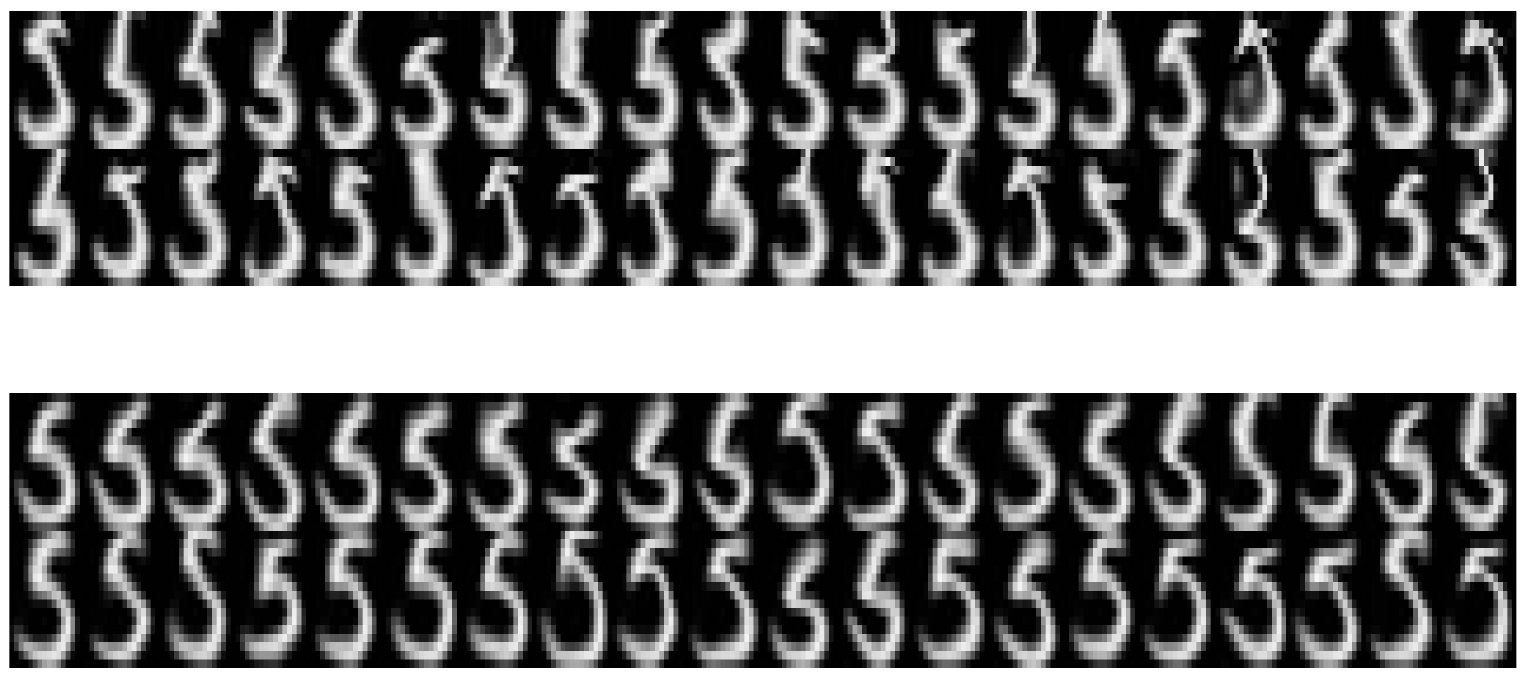} 
\caption{Synthetic images sampled from the model for digit $5$ using the parameter estimates obtained with the batch version on $20$ images  (top) and with the mini-batch version with $\alpha=0.2$ on $100$ images (bottom).}
\label{synthetic5}
\end{center}
\end{figure*}

In this section an example on handwritten digits illustrates the benefits of using mini-batch sampling. 
We consider the dense deformation template model  for image analysis that was first introduced in \citet{alla2007}. This model considers observed images as deformations of a 
common reference image, called template. 

\subsubsection{Model and algorithm}
Using    the formulation    in \citet{alla2010}, let $(y_i)_{1\leq i\leq n}$ be  $n$ observed gray level images. Each image $y_i$ is defined on a grid
of pixels $\Lambda \subset \R^2$, where for each $s\in \Lambda$,
$x_s$ is the location of pixel $s$ in a domain $D \subset
\R^2$. We assume that every image derives from the deformation of a common unknown template $I$, 
 which is a function from $D$ to $\R$. Furthermore,
we assume that for every image $y_i$ there exists 
an unobserved deformation field $\Phi_i:\R^2 \to 
\R^2$  such that
\begin{eqnarray*}
y_i(s) = I(x_s - \Phi_i(x_s)) + \sigma \varepsilon_i (s), 
\end{eqnarray*}
where $ \varepsilon_i(s)$ have standard normal
distribution and  $\sigma^2$ denotes the variance. To formulate this complex problem in a simpler   way,  the
 template $I$ and the deformations $\Phi_i$ are supposed to have  the following parametric form.
Given a set of landmarks denoted by $(p_k)_{1\leq k\leq k_p}$, which covers the
domain $D$, the template function $I$ is parametrized by
coefficients $\xi \in \R^{k_p}$ through
$$
\I0 = \Kp \xi, \quad {\rm where  }\quad  (\Kp \xi)(x) = \sum\limits_{k=1}^{k_p}
\Kp (x,p_k) \xi(k), $$  and $\Kp$ is a fixed   known kernel. 
Likewise, for another  fixed set of landmarks 
$(g_k)_{1\leq k\leq
   k_g} \in D$, 
   the deformation field  is given by 
$$\Phi_i(x) = (\Kg z_i)(x) = \sum\limits_{k=1}^{k_g} \Kg (x,g_k)
(z_i^{(1)}(k),z_i^{(2)}(k)),$$
where $z_i = (z_i^{(1)},z_i^{(2)}) \in \mathbb{R}^{k_g} \times \mathbb{R}^{k_g}$  and again,  $\Kg$ is a fixed   known kernel. The latent variables    $z_i$ are   centered Gaussian variables with covariance matrix~$\Gamma$.
We refer to \citet{alla2010} for further details on the model and also  for the implementation of the MCMC-SAEM algorithm, which
 estimates all model parameters  $(\xi, \Gamma, \sigma^2)$, and so the template $I$.

 \subsubsection{Numerical results}
 In our numerical study we compare the performance of the standard MCMC-SAEM algorithm to  the mini-batch version   on   images from the United States Postal Service database \citep{Hull94}. 
 
In the first experiment we set the mini-batch  proportion $\alpha$ to $0.1$ and have a look on  the   performances
during the very first iterations of the algorithms. 
 Figure \ref{template5} shows the estimated template  for digit $5$ with $n=20$ images after the first three epochs, that is, after three passes through the dataset.
 We observe that   
 the mini-batch algorithm obtains  a more contrasted and accurate template estimate than the batch version.
 That is,   convergence is accelerated   at the beginning of the algorithm when mini-batch sampling is used.
This is very similar to our observations in the nonlinear mixed model in the previous section. 

Now the question is how to take  advantage of this speed up in practice, as we usually do not stop any algorithm after only three epochs. 
As it is good use to run any MCMC-SAEM algorithm until convergence, our approach   is the following. 
Suppose that  we find the computing time acceptable  when running 
the batch algorithm on $n=20$ images   during $1000$ iterations, that is, until convergence.
Can we do something better by using mini-batch sampling, say with $\alpha=0.2$, within the same computing time? When applying the mini-batch algorithm on the same 20 images, convergence is attained much faster and there is a gain in  computing time, but probably also some loss in accuracy. This is not what we are interested in, as we want to make use of the entire allotted computing time. The solution is to increase the number of images in the input of the mini-batch algorithm. Our reasoning is the following: in the batch version 20 images are processed per iteration. So the mini-batch algorithm  with $\alpha=0.2$  can be applied to $n=100$ images, as in average only 20 images are used per iteration.
Hence,  running  the mini-batch algorithm with $\alpha=0.2$ and $n=100$  and the batch version with $n=20$ over the same number of  iterations takes almost exactly the same time. 
We see that, given a constraint on the computing time,
using a small mini-batch proportion allows   to increase the number of images in the input. 

To assess the accuracy of the  estimates obtained by the two algorithms, 
we generate new samples from the model    using the parameter estimates obtained by the different algorithms.
From the synthetic images   presented in  Figure \ref{synthetic5} we see   that the ones in the lower part of the figure
 resemble 
 more  usual handwritten digits $5$   than the ones in the upper part. This  highlights that both template and deformation are better estimated by the mini-batch version performed on $100$ images of the dataset than with the batch algorithm on 20 images. Hence,   given a constraint on the computing time, more accuracy can be obtained by using the mini-batch MCMC-SAEM instead of the original  algorithm.

 \subsection{Stochastic block model }\label{subsec:sbm}
Now we turn to a random graph model  which is interesting as it has  a  complex dependency structure.
The model is  the so-called  stochastic block model  (see \cite{matias14} for a review), where every observation depends on more than one latent component.

 \subsubsection{Model }
 In the stochastic block model (SBM) the latent variable $\mathbf z=(z_1,\dots,z_n)$ is composed of  i.i.d. random variables $z_i$ taking their values in
 $\{1,\dots,Q\}$ with probabilities $p_q=\pr_{\btheta}(z_1=q)$ for $q=1,\dots,Q$.  
 The observed adjacency matrix $\mathbf
 y=(y_{i,j})_{i,j}$ of a directed graph is such that the observations $y_{i,j}$ are independent conditional on
 $\mathbf z$ and $y_{i,j}|\mathbf z$ has Bernoulli distribution with parameter $\nu_{z_i,z_j}$ depending on the latent
 variables of the interacting nodes $i$ and $j$.  
 
 We see that  every observation $y_{i,j}$ depends on two latent components, namely on $z_i$ and $z_j$. In turn,
 the   latent component $z_{k}$  influences all    observations in the set $\{y_{i,k}, i=1,\dots, n\}\cup\{y_{k,j}, j=1,\dots,n\}$   creating complex stochastic dependencies between the observations. 

 Denote $\btheta=((p_{q})_{1\le q\le Q},(\nu_{q,l})_{1\le q,l\le Q})$ the collection of model parameters for a directed SBM.  The complete log-likelihood function is given by 
 \begin{align*}
 &\log \pr_{\btheta}(\mathbf y, \mathbf z)
 = \log \pr_{\btheta}(\mathbf y|\mathbf z)+\log  \pr_{\btheta}(\ \mathbf z)\\
 &=\sum_{q=1}^Q\sum_{i=1}^n\1\{z_{i}=q\}\log p_{q} \\
 &\quad +\sum_{q,l}\sum_{i, j}\1\{z_{i}=q,z_{j}=l\}y_{i,j} \log\nu_{q,l} \\
 &\quad+\sum_{q,l}\sum_{i, j}\1\{z_{i}=q,z_{j}=l\}(1-y_{i,j})  \log
 (1- \nu_{q,l})
 \end{align*}
where $\1\{A\}$ denotes the indicator function of the set $A$. Hence, the complete log-likelihood of the SBM belongs to the exponential family with  the following  sufficient statistics, for $1\leq q,l\leq Q$,
 \begin{align*}
 S_1^q(\mathbf z)
 &=\sum_{i=1}^n\1\{z_{i}=q\},\\
 S_2^{q,l}(\mathbf z)
 &=\sum_{i,j} \1\{z_{i}=q,z_{j}=l\} y_{i,j},\\
S_3^{q,l}(\mathbf z)
 &=\sum_{i,j}\1\{z_{i}=q,z_{j}=l\}(1-y_{i,j}).
 \end{align*}
and  corresponding natural parameters
 $\varphi_1^q(\btheta)=\log p_q$,
 $\varphi_2^{q,l}(\btheta)=\log\nu_{q,l}$ and
 $\varphi_3^{q,l}(\btheta)=\log(1-\nu_{q,l})$.

 \subsubsection{Algorithm}
  \begin{figure}[t]
\begin{center}
\includegraphics[width=0.5\textwidth]{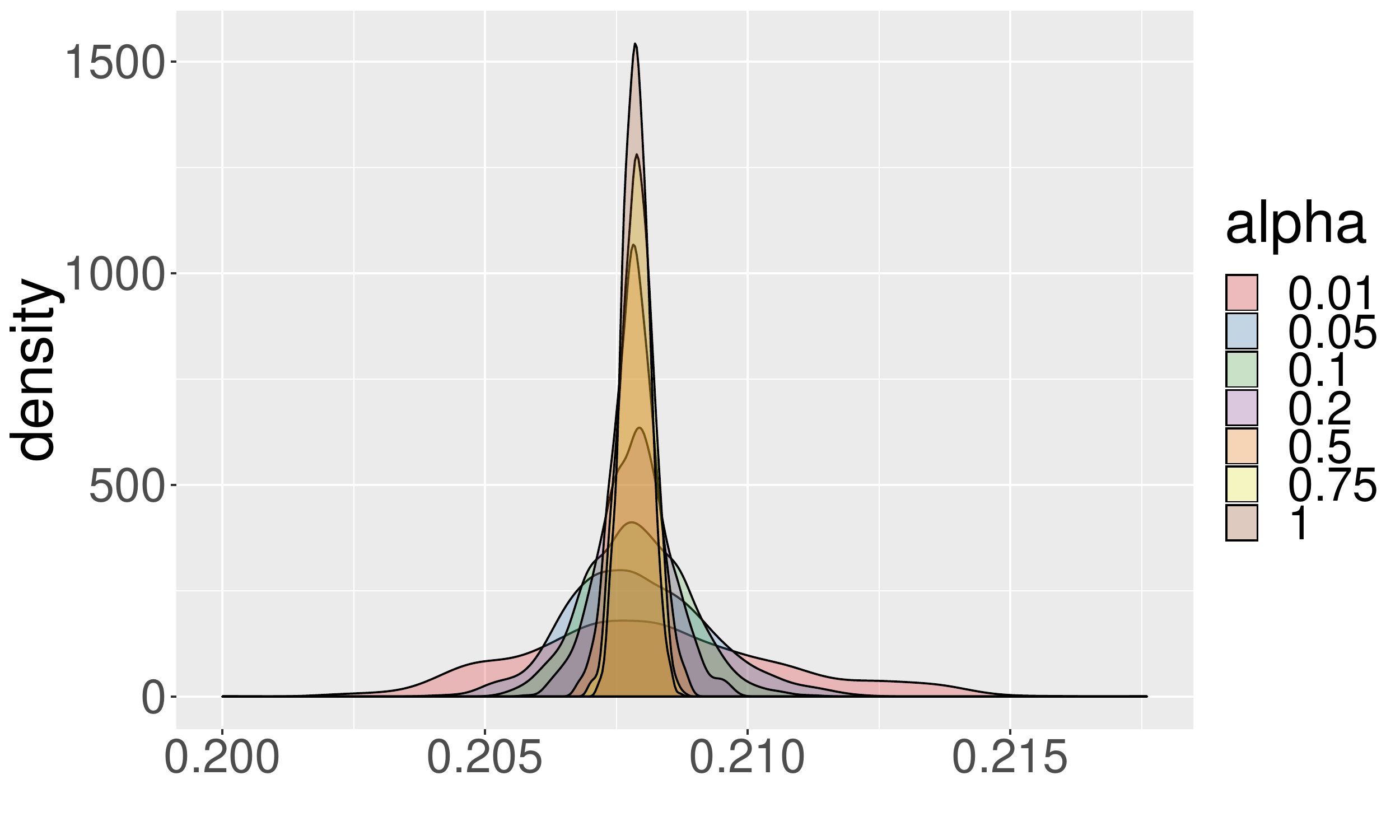} 
\caption{Estimation of the limit distribution of the estimate of $\nu_{2,2}=0.2$ after $10\ 000$   iterations for the mini-batch algorithm with   proportions $\alpha\in\{0.01, 0.05, 0.1, 0.2, 0.5, 0.75, 1\}$. }\label{figSBMhistGamma4}
\end{center}
\end{figure}

 The implementation is straightforward. In the simulation step, 
   the  proposal distribution $q$ of latent variables  in the Metropolis algorithm is the discrete uniform distribution on $\{1,\dots,Q\}$. 
 
As in  other models, the update of the sufficient statistic $S(\mathbf{z})$ can be numerically optimized. 
Indeed, with $\mathcal I_k$ denoting the indices of latent components that are simulated, the vectors $\mathbf z_k$ and $\mathbf z_{k-1}$ are only different in the  components with indices belonging to $\mathcal I_k$. As a consequence, the statistic
$S_1^q(\mathbf z_k)$, for instance, is more rapidly updated by computing
$$S_1^q(\mathbf z_{k-1}) + \sum_{i\in\mathcal{I}_k}(\1\{z_{k,i}=q\} - \1\{z_{k-1,i}=q\})$$
than by the formula $\sum_{i=1}^n\1\{z_{k,i}=q\}$, which involves more operations.
 Likewise,  $S_2^{q,l}(\mathbf z_k)$ is faster computed as follows
\begin{align*} &S_2^{q,l}(\mathbf z_k)
 =S_2^{q,l}(\mathbf z_{k-1}) 
  +\sum_{i\in\mathcal I_k}\sum_{j=1}^n\1\{z_{k,i}=q,z_{k,j}=l\}y_{i,j}\\
  &\quad-\sum_{i\in\mathcal I_k}\sum_{j=1}^n\1\{z_{k-1,i}=q,z_{k-1,j}=l\}y_{i,j}\\
  &\quad+\sum_{i=1}^n\sum_{j\in\mathcal I_k}\1\{z_{k,i}=q,z_{k,j}=l\}y_{i,j}\\
 &\quad- \sum_{i=1}^n\sum_{j\in\mathcal I_k}\1\{z_{k-1,i}=q,z_{k-1,j}=l\}y_{i,j}.
\end{align*}
Here we see that not the entire data $\by$ are visited to update $S_2^{q,l}$, but only those observations $y_{i,j}$ that stochastically depend on the updated latent components $z_i$ with $i\in\mathcal I_k$.

 The maximisation of the complete likelihood function with given values for the sufficient statistics $s^q_1, s^{q,l}_{2}$ and $s^{q,l}_{3}$ 
 is straightforward. The updated parameter values are  given by  $p_{q} ={s^q_{1}}/n$ and $\nu_{q,l}= {s^{q,l}_{2}}/{(s^{q,l}_{2} +s^{q,l}_{3}})$. 

 \subsubsection{Simulation results}

 \begin{figure}[t]
\begin{center}
\includegraphics[width=0.5\textwidth]{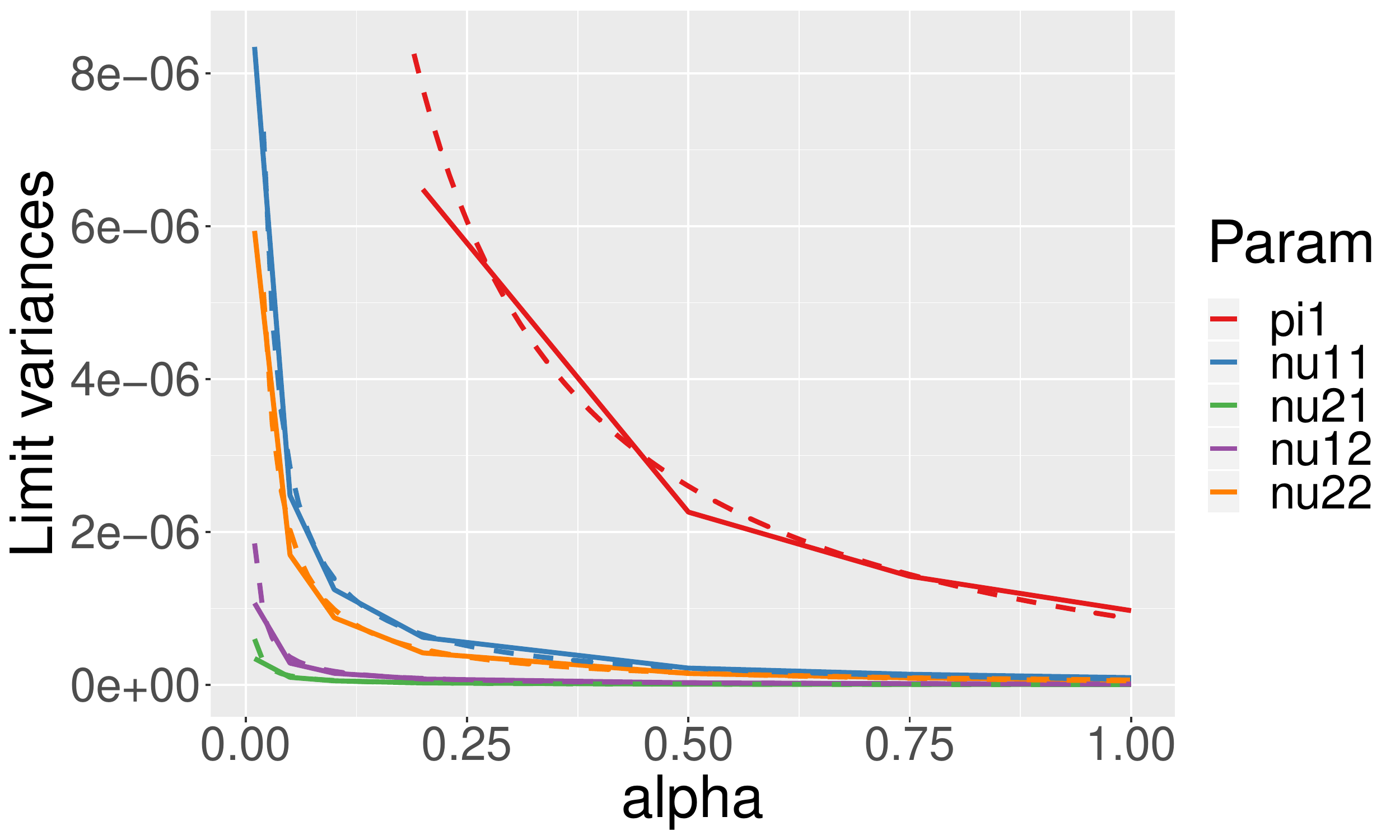} 
\caption{Sample variances of the parameter estimates  after $10\ 000$   iterations as a function of the   mini-batch    proportion $\alpha$ (solid lines) and adjusted theoretical limit variances (dashed lines).  }\label{figSBMlimvar}
\end{center}
\end{figure}

For our simulations,  model parameters are set to $\pi_1=1-\pi_2=0.6$  and 
$$
(\nu_{q,l})_{q,l}=\begin{pmatrix} 0.25 & 0.1 \\ 0.1 & 0.2
\end{pmatrix}.
$$
To study   the impact of   mini-batch sampling  on the asymptotic behavior of the estimates, 
a directed graph  with  $n=100$  nodes   is generated from the model and the mini-batch algorithm with $\alpha$ ranging from $0.01$ to  $1$  and with $10\ 000$ iterations is applied 1000 times. 

Figure \ref{figSBMhistGamma4} shows the histograms of the estimates of parameter $\nu_{2,2}=0.2$ obtained with the different algorithms. All histograms are  approximately unimodal, centered at the same value and symmetric. Moreover, we see that the larger the mini-batch size the tighter the distribution. Indeed, it seems  that the estimates are asymptotically normal and   the limit variance  increases when $\alpha$ decreases. 
This increase of the  limit variance induced by mini-batch sampling  is illustrated for all model parameter estimates in Figure   \ref{figSBMlimvar}. Furthermore,  Figure   \ref{figSBMlimvar} checks  whether the theoretical formula of the limit variance derived in Section \ref{subsec:limitvar} is adequate. Recall that we conjecture that
 the limit variance obtained with the mini-batch algorithm with proportion $\alpha$ equals $(2-\alpha)/\alpha$ times the limit variance obtained with the batch algorithm, here represented by the dashed lines.
The excellent fit supports our conjecture.
 
 \begin{figure}[t]
\begin{center}
\includegraphics[width=0.5\textwidth]{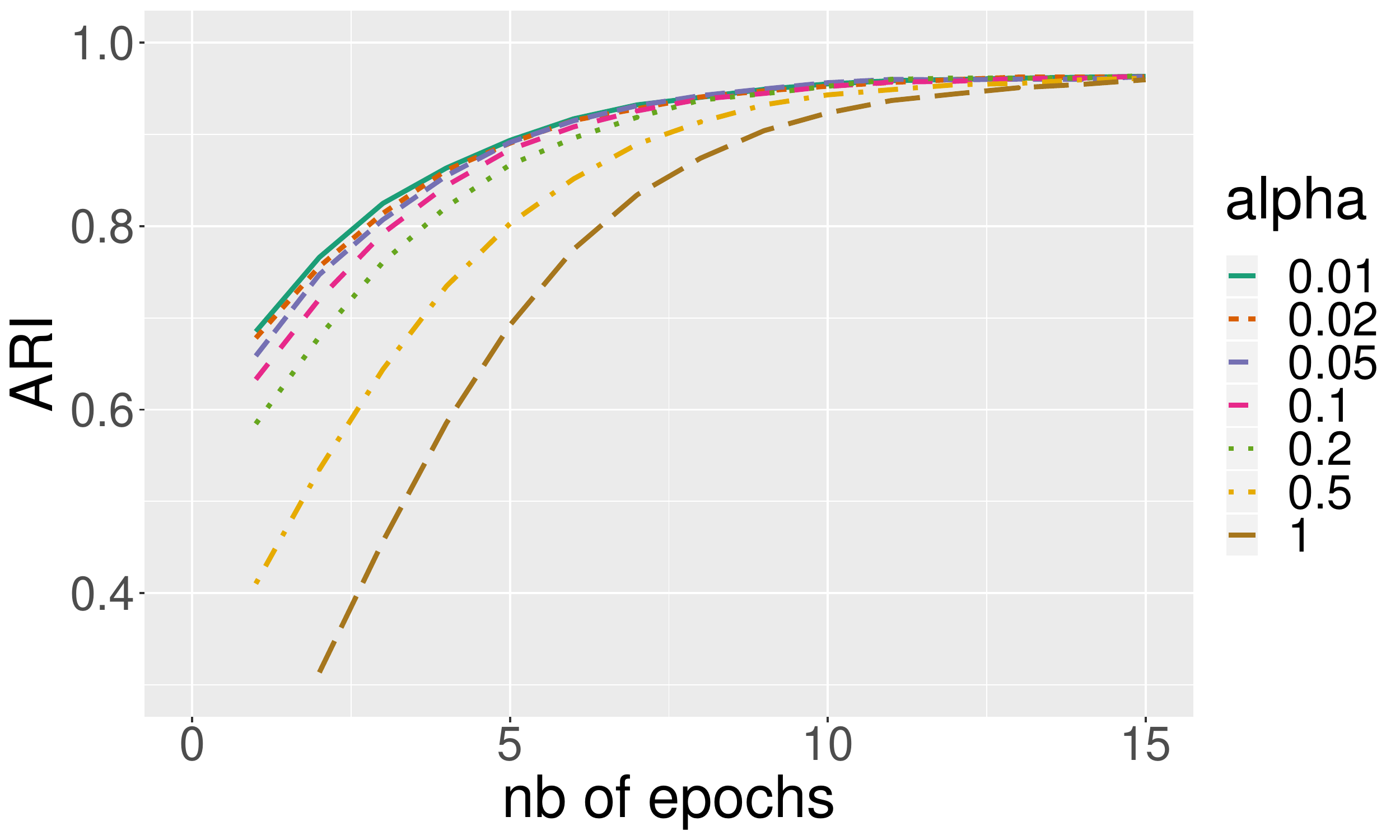} 
\caption{Mean ARI   obtained by mini-batch MCMC-SAEM  algorithms  as a  function of the number of epochs for $\alpha\in\{0.01, 0.02, 0.05, 0.1, 0.2, 0.5, 1\}$. }\label{figSBMaccel}
\end{center}
\end{figure}

Concerning the behavior at the beginning of the algorithm, we observe the same   acceleration for the mini-batch versions as in the other models.  Here 500 datasets are simulated from the SBM   and Figure~\ref{figSBMaccel} shows the evolution of the adjusted rand index (ARI) during the first epochs \citep{HA1985}. This index compares the clustering of the nodes obtained by the algorithm to the true block memberships given by the latent components $z_i$. The ARI equals one if two clusterings are identical up to permutation of the block labels. We see that the algorithm with the smallest mini-batch proportion provides the best clustering at any given number of epochs. Finding the good clustering is essential for accurate parameter estimates.

Finally, we  have a closer look on  computing time aspects. 
As already mentioned, the computing time  of  the M-step does not depend on the mini-batch proportion $\alpha$.
 It is also  clear that the  simulation step in the mini-batch algorithm is in average $\alpha$ times the computing time of the simulation step in the batch version, as only a proportion $\alpha$  of the latent components are simulated. 
However,  the computing time of the stochastic approximation step, and in particular the
 update of the sufficient statistic,   depends on the amount of data used and thus on the dependency structure of the model. In most models, where every observation only depends  on a single latent component $z_i$, the proportion of data involved in the update is $\alpha$. However, in the SBM 
a set of latent components $\{z_1, \dots, z_m\}$ for   $m<n$ influences  the set of observations $\{y_{i,j}, i=1,\dots, m, j=1,\dots,n\}\cup\{y_{i,j}, i=1,\dots,n,j=1,\dots,m\}$, whose cardinality is $2mn -m^2$. It follows that  for a mini-batch proportion $\alpha$ the corresponding proportion of data used 
to update the sufficient statistics is  
$\alpha(2  -\alpha)$ and so the computing time of this step is $\alpha(2  -\alpha)$ times the computing time of the stochastic approximation step in the batch algorithm.

Let us call SAE-step   the combination of the simulation step and the stochastic approximation step.
We determine the median computing time of the SAE-step  over 100 runs of the SAE-step for different mini-batch proportions $\alpha$ and different numbers of nodes $n$.
Figure~\ref{figSBMcompTimeNbnodes} shows the 
ratio of the median   time of the mini-batch SAE-step with proportion $\alpha$  over the median time of the batch SAE-step for different numbers of nodes, that is,
$$
\alpha\mapsto\frac{\text{median   time of SAE-step with $\alpha$}}{\text{median   time of batch SAE-step}}.
$$ 
  The dashed lines represent the functions $\alpha\mapsto\alpha$ and $\alpha\mapsto\alpha(2  -\alpha)$. The first corresponds to the   expected  ratio of computing times of the simulation step, the latter to the  expected  ratio of computing times
of the stochastic approximation step.
As expected, the observed computing time ratios  of the entire SAE-step fall between these two boundaries.

  \begin{figure}[t]
\begin{center}
\includegraphics[width=0.5\textwidth]{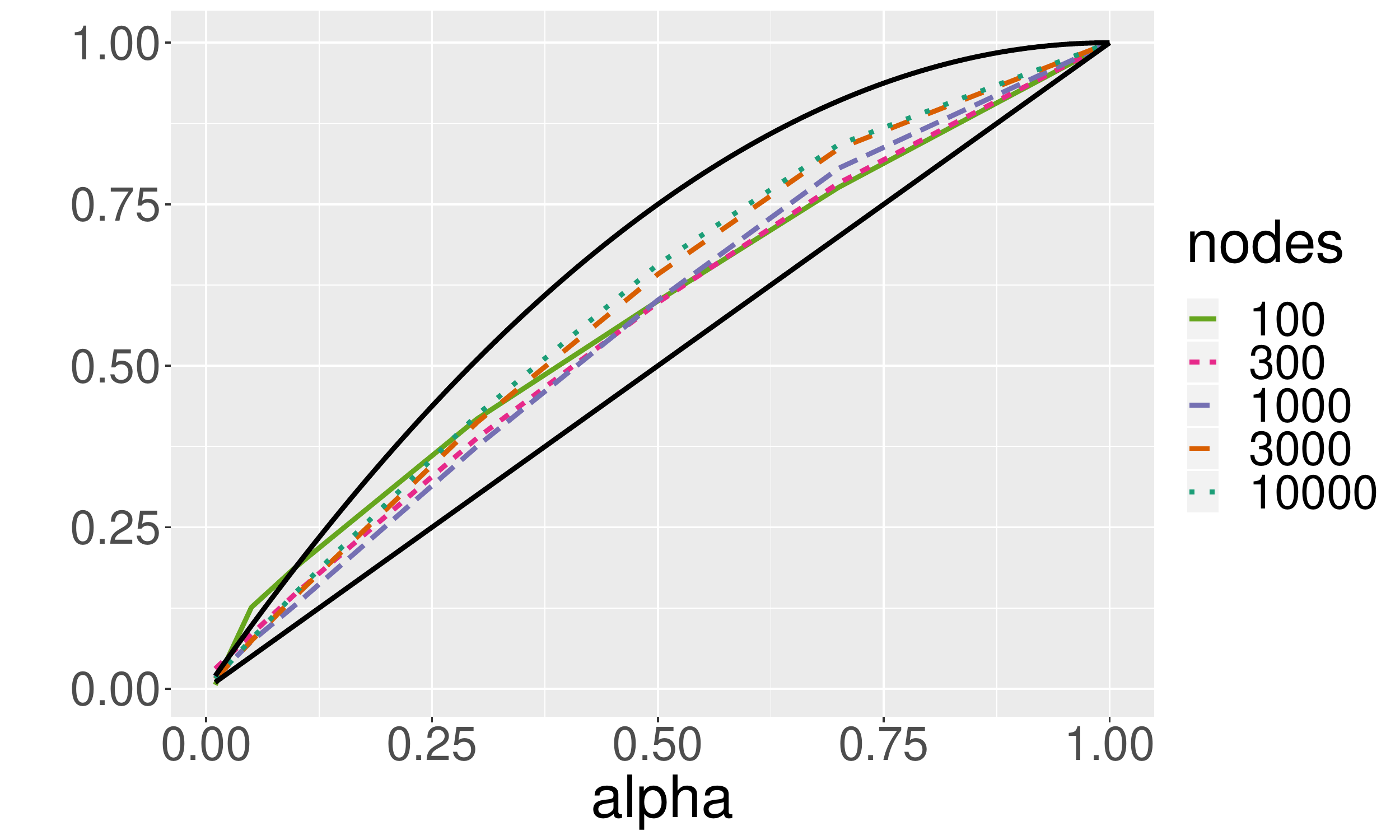}
\caption{
Ratio of the median computing time of the mini-batch SAE-step with $\alpha$ over the median time of the batch SAE-step for different numbers of nodes (solide lines). 
  The dashed lines represent the functions $\alpha\mapsto\alpha$ and $\alpha\mapsto\alpha(2  -\alpha)$, the first corresponds to the   expected  ratio of computing times of the simulation step, the latter to the  expected  ratio of 
of the stochastic approximation step. 
}\label{figSBMcompTimeNbnodes}
\end{center}
\end{figure}

\subsection{Frailty model in survival analysis} \label{subsec:frailty}
In survival analysis the frailty model is an extension of the well-known Cox model \citep{duchateau_book}. The hazard rate function  in the frailty model  includes an additional random effect, called frailty, to account for  unexplained heterogeneity.

\subsubsection{Model and algorithm}

The observations are survival times $\mathbf t= (t_{ij})_{1\le i \le n, 1\le j \le m}$ measured over $n$ groups with $m$  measurements per group, and  covariates ${\mathbf X_{ij}}  \in \R^p$. 
We denote by
 $\lambda_0$ the baseline hazard function, that is  here chosen  to be   the  Weibull function    given by
\[
\lambda_0(t) = \lambda_0 \rho t ^{\rho-1} , \quad  t>0, 
\]
with $\lambda_0 >0$ and $\rho>1$. 
For every group $i$, a latent variable $z_i$ is introduced representing   a frailty term.
We suppose that $z_1,\dots,z_n$ are i.i.d. with centered Gaussian distribution with variance $\sigma^2$.  
 The conditional hazard rate $ \lambda_{ij} (\cdot | z_i) $ of 
observation $t_{ij}$ given the frailty $z_i$  is defined as 
\begin{equation*}
 \lambda_{ij} (\cdot | z_i) = \lambda_0(\cdot) \exp({\mathbf X_{ij}} ^\intercal {\boldsymbol\beta} +z_i),
\end{equation*}
where ${\boldsymbol\beta} \in \R^p$. 
Thus, the unknown model parameter is $\btheta=({\boldsymbol\beta},\sigma^2, \lambda_0, \rho)$. In practical
applications the main interest  lies in the estimation of the regression parameter~${\boldsymbol\beta}$.

In the frailty model the   conditional survival function is given by
\begin{align*}
  G_{ij }(t| z_i) &= \pr (t_{ij} >t |z_i)\\
  &= \exp [- \lambda_0 t^\rho \exp({\mathbf X_{ij}}^\intercal {\boldsymbol\beta} + z_i)]. 
\end{align*}
In other words, the conditional distribution of the survival time $t_{ij}$ given $z_i$ is the Weibull distribution with scale parameter $\lambda_0 \exp({\mathbf X_{ij}}^\intercal {\boldsymbol\beta} + z_i)$
 and shape parameter  $\rho$. 
For the   conditional and the complete likelihood functions we obtain 
\begin{align*}
  \prt (\mathbf t  | \bz) &= \prod_{i=1}^n \prod_{j=1}^m G_{ij} (t_{ij} |
  z_i) \lambda_{ij}(t_{ij} | z_i) , \\
   \prt (\mathbf t, \bz) &= \prt (\mathbf t  | \bz)\prod_{i=1}^n \varphi\left(\frac{z_i}\sigma\right), 
\end{align*}
where $\varphi$ denotes  the density of the standard normal distribution.

 The implementation of the algorithm is straightforward.
In the simulation step candidates $\tilde z_{k,i}$ are drawn  from the normal distribution $\mathcal N(z_{k-1,i},0.2)$.  In the M-step, the updates of $\sigma^2$ and $\lambda_{0,k}$ are explicit, while those of 
 ${\boldsymbol\beta}$ and $\rho$ are  obtained by  the Newton-Raphson  method.

\subsubsection{Numerical results}

\begin{figure}[t]
\begin{center}
\includegraphics[width=0.5\textwidth]{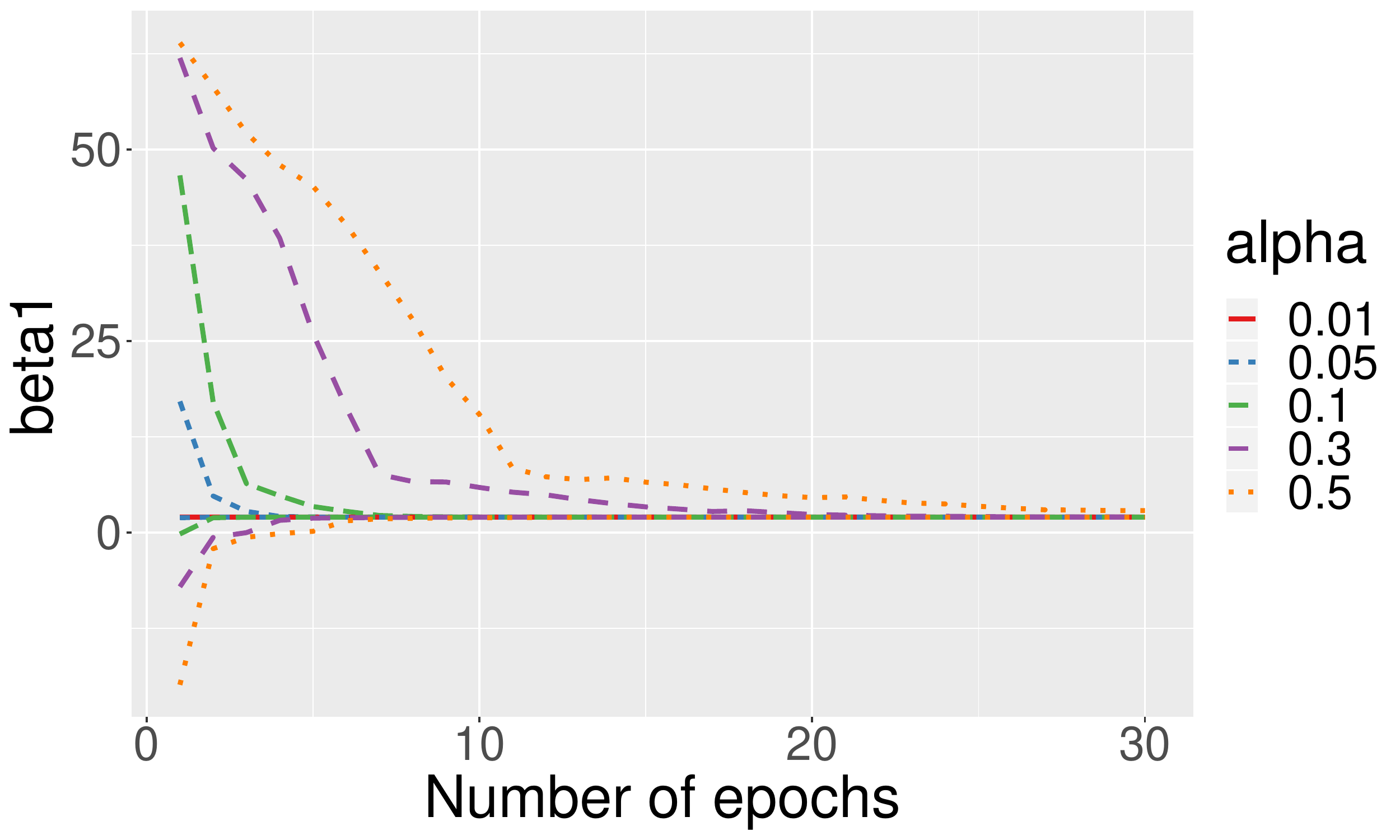}
\caption{Evolution of  80\%-confidence bands for $\beta_1$  with respect to the number of epochs for  mini-batch proportions $\alpha\in\{0.01, 0.05, 0.1, 0.3, 0.5, 0.8, 1\}$}\label{fig_frailty_confint}
\end{center}
\end{figure}

\begin{figure}[t]
\begin{center}
\includegraphics[width=0.5\textwidth]{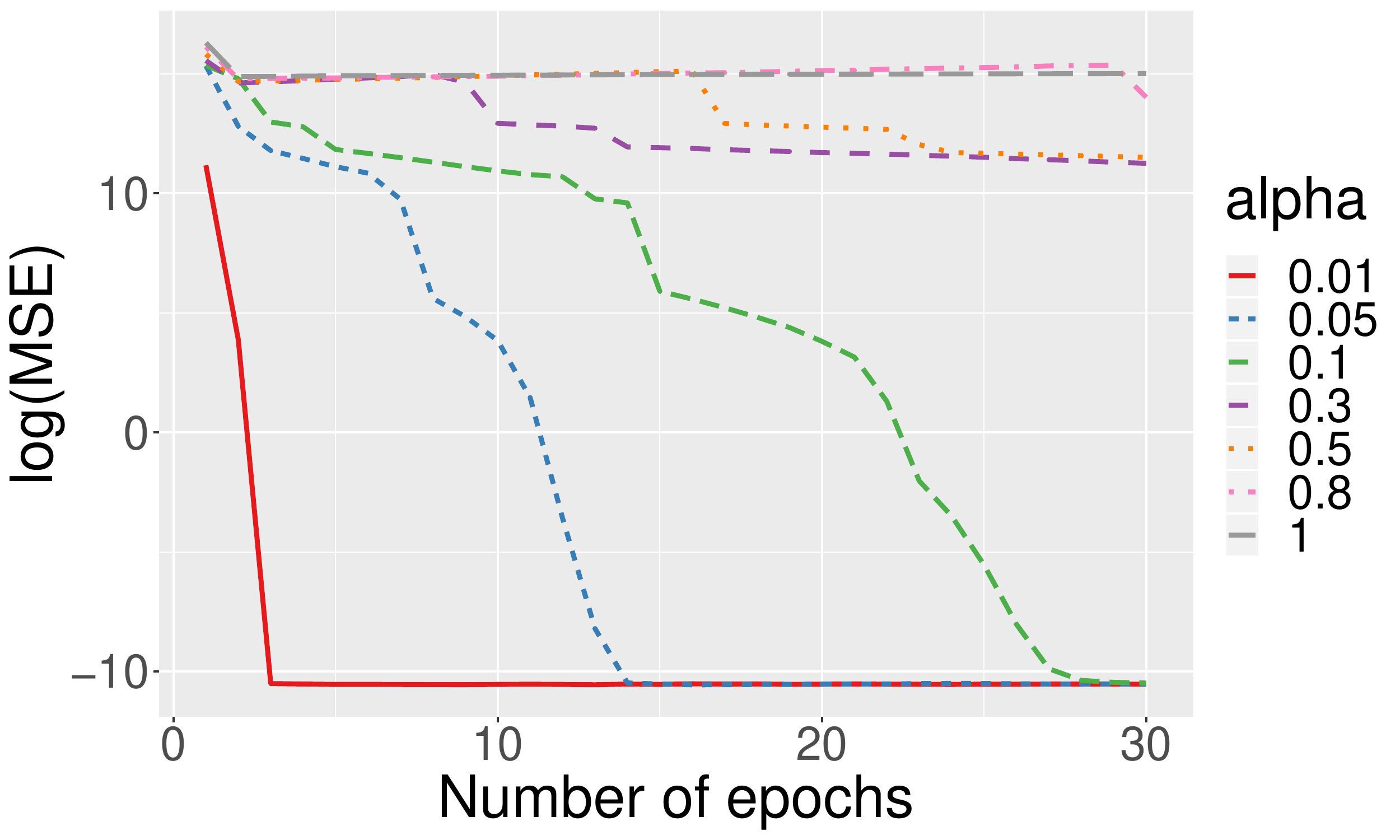}
\caption{Evolution of  the  logarithm of the empirical mean squared error of  estimates of $\beta_1$ with respect to the number of epochs for  mini-batch proportions $\alpha\in\{0.01, 0.05, 0.1, 0.3, 0.5, 0.8, 1\}$.   }\label{fig_frailty_loglik}
\end{center}
\end{figure}

\begin{figure}[t]
\begin{center}
\includegraphics[width=0.5\textwidth]{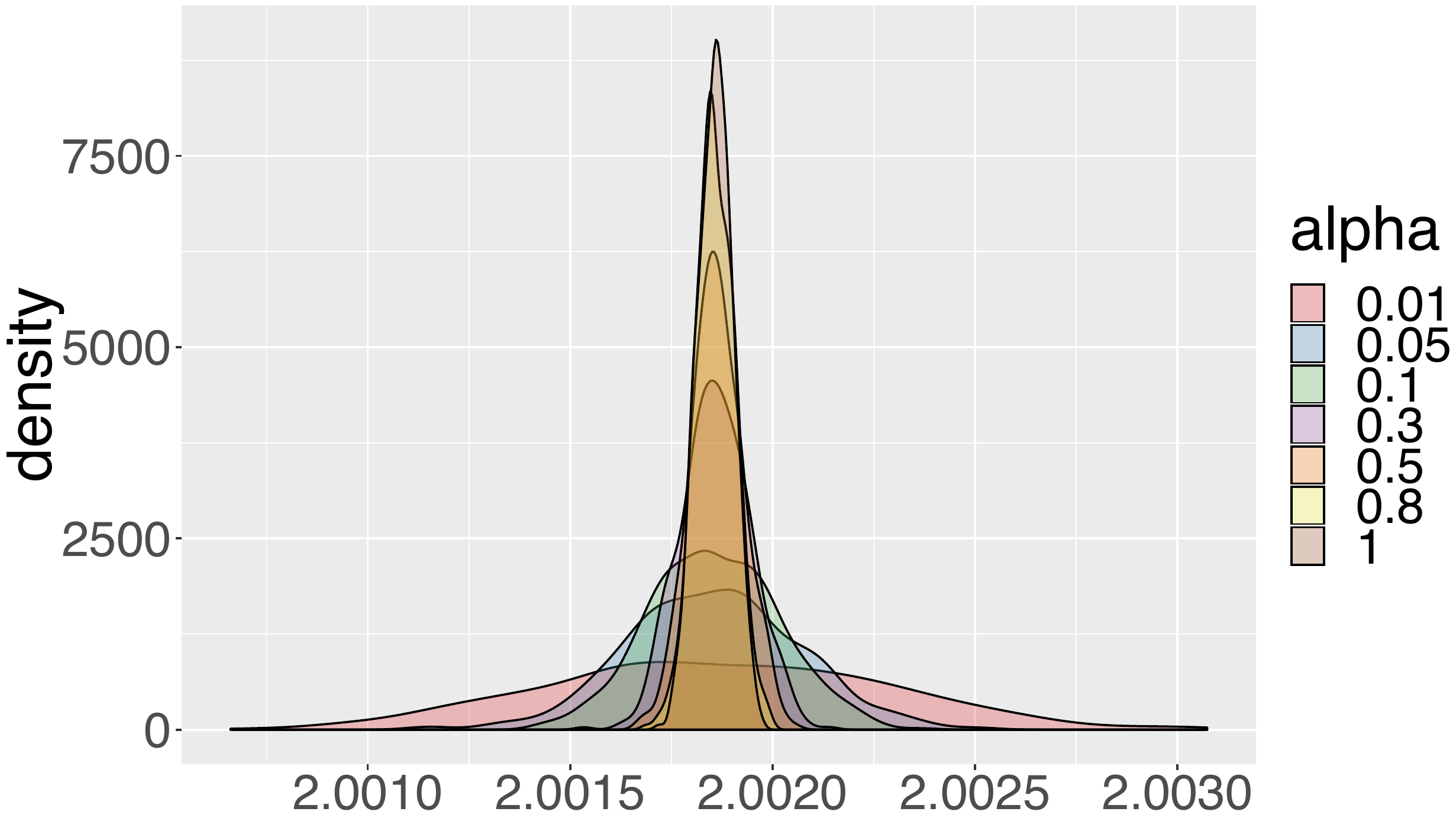} 
\caption{Estimation of the limit distribution of the estimate of $\beta_1$ after $8000$   iterations for the mini-batch algorithm with   proportions $\alpha\in\{0.01, 0.05, 0.1, 0.3, 0.5, 0.8, 1\}$.}\label{fig_frailty_limitvariance}
\end{center}
\end{figure}
 
\begin{figure}[t]
\begin{center}
\includegraphics[width=0.5\textwidth]{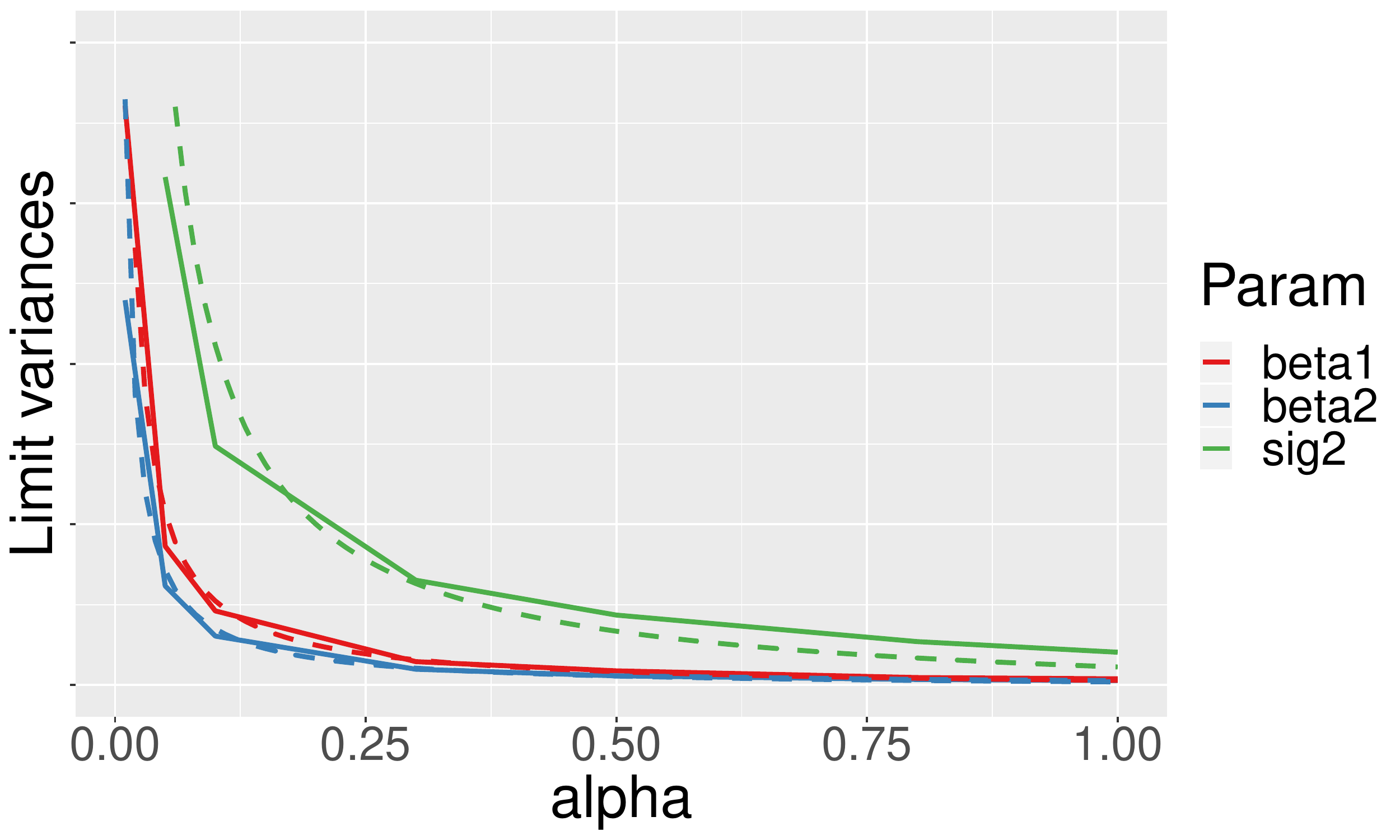} 
\caption{Sample variances of the parameter estimates  after $8000$   iterations as a function of the   mini-batch    proportion $\alpha$ (solid lines) and adjusted theoretical limit variances (dashed lines). }
\label{fig_frailty_limitvariance_theo}
\end{center}
\end{figure}

In a simulation study we consider  the frailty model with  parameters fixed to ${\boldsymbol\beta}=(\beta_1,\beta_2)=(2,3), \lambda_0=3$, $\sigma^2=2$ and $\rho=3.6$. We set $n=5000$ and $m=100$. The covariates ${\mathbf X_{ij}}$ are drawn  independently from the uniform distribution for every dataset. 
  
  In the first setting,  500 datasets are generated and    the mini-batch MCMC-SAEM algorithm with random initial values and   mini-batch proportions $\alpha$ between $0.01$ and $1$
  is applied. 
Figure \ref{fig_frailty_confint} and \ref{fig_frailty_loglik} shows the evolution of the precision of the mean of the estimates $\bar\beta_{1,k}=\sum_{l=1}^k\beta_{1,l}/k$ of parameter component  $\beta_1$ as a function
of the number of epochs.
 Figure \ref{fig_frailty_confint}   shows  80\%-confidence bands for  $\beta_1$, and Figure  \ref{fig_frailty_loglik}  gives the corresponding evolution of the logarithm of the empirical mean squared error. Again, it is clear from both graphs that the rate of convergence depends  on the mini-batch proportion. At (almost) any number of epochs, the mean squared error and the width of the confidence intervals  is increasing in  $\alpha$. From  Figure \ref{fig_frailty_confint} and \ref{fig_frailty_loglik}  we see, for instance, that
 the mini-batch version with $\alpha=0.01$ attains
  convergence   after only three epochs, while almost 30 epochs are required when $\alpha=0.1$. 
The   choice of $\alpha$ that achieves the fastest convergence is the smallest mini-batch proportion, here $0.01$. 
%

  In the second setting, we  study the asymptotic behavior of the estimates, when the algorithms are supposed to  have converged, that is after 8000 iterations. 
To evaluate the variance of the estimates that is only due to the stochasticity of the algorithm, we fix a dataset and run the different algorithms 500 times using different initial values.  
 Figure \ref{fig_frailty_limitvariance} illustrates the limit distributions of the estimates of $\beta_1$, which seem to be Gaussian, centered at the same value, but with varying variances. 
 Figure \ref{fig_frailty_limitvariance_theo} gives the corresponding values of the limit variances for the different parameter estimates.
 Again, we see that the limit variance increases when $\alpha$ decreases.  This is expected as less data are visited per
 iteration for smaller $\alpha$. Furthermore,  we conjecture in Section  \ref{subsec:limitvar} that
   the theoretical limit variance of the sequence generated by the algorithm with minibatch size $\alpha$  is equal to $
   V_1 ({2-\alpha})/\alpha $ where $ V_1$ stands for the limit variance of the batch algorithm. Figure \ref{fig_frailty_limitvariance_theo} shows a very good fit of
 the sample variances of the different parameters to 
  the  function $\alpha\mapsto v_1 ({2-\alpha})/\alpha $.

\section{Conclusion}
 
In this paper we have proposed to perform  mini-batch sampling in the MCMC-SAEM algorithm. We have shown that the mini-batch algorithm belongs to the family of classical  MCMC-SAEM algorithms with  specific  transition kernels.
It is also shown that the mini-batch algorithm converges almost surely, whenever the original algorithm does. 
Concerning the limit distribution, according to heuristic arguments and simulation results in different models, estimators are asymptotically normal and we have quantified  the impact of the mini-batch proportion onto the limit variance. However, the formal proof of this result is left for future work.

The numerical experiments carried out in various latent variable models illustrate that  mini-batch sampling  leads to an important speed-up of convergence at the beginning of the algorithm  compared to the original algorithm. 
In most papers on mini-batch sampling as well as in this paper, the illustration of this speed up  is based on a
comparison of estimates at the same number of epochs. However,  the computing time of an epoch depends on the mini-batch
proportion $\alpha$, on the amount of data used in the stochastic approximation step and on the computing time of the
M-step, which is performed much more often when $\alpha$ is small. Indeed, the smaller the value of $\alpha$, the larger the number of   M-steps within an epoch.  In the frailty model, for instance, the maximisation in the M-step is not explicit and thus more time-consuming than the other steps of the algorithm, which makes mini-batch sampling less attractive for practical use in  this model. This also raises the question of the interest of an analysis of the 
performance of the estimators with respect to the number of epochs. From a practical point of view, in future studies, we advocate that it would be much more appealing to compare algorithms relying on the same computing time rather than on the same number of epochs.

The study of the stochastic block model has shown that the common presentation of mini-batch sampling as a method where a subset of the data is selected to perform an iteration is misleading. Indeed, in the algorithm,  first the latent components that are to be simulated are chosen, and only then, the data that are associated with these updated latent components are determined to perform the update of the sufficient statistic. In models where every observation only depends  on a single latent component, the proportion of   data   used for the update equals the proportion of simulated latent components. However, in models with a more involved dependency structure as SBM this does no longer hold. 
As a consequence, in the SBM the computing time of one iteration of the SAE-step in the mini-batch
algorithm is not $\alpha$ times  the   corresponding SAE-step in the batch version.

These issues on the computing time lead us to the important question of how to make good use of mini-batch sampling in practice, where we are often confronted to constraints on the computing time. 
It seems to us that the relevant problem is to find the algorithm that  provides the best  results within some allotted computing time.
As we have seen in Section \ref{subsec:imageanalysis}, 
combining mini-batch sampling with an increase of the number of observations compared to the batch version is a means to achieve more accurate estimates under a given constraint on the computing time.
Indeed, increasing the sample size for mini-batch algorithms  compensates for the loss in accuracy of the final estimates, while the acceleration of the convergence at the beginning of the algorithm ensures the convergence of the MCMC-SAEM algorithm within the considered computing time.

To find the optimal mini-batch proportion $\alpha$ and the associated optimal sample size, an  analysis of the computing time per iteration is required, instead of  per number of epochs. These optimal values are  model dependent.
Furthermore, as any
MCMC-SAEM algorithm must always be run until convergence, it is necessary to understand the impact of the mini-batch
proportion $\alpha$ and the sample size on the convergence of the algorithm, that is, on the number of iterations required to achieve convergence.

\medskip
All programs are available on request from the authors.

\begin{acknowledgements}
Work partly supported by the grant ANR-18-CE02-0010 of the French National Research Agency ANR (project EcoNet).
\end{acknowledgements}

%

\end{document}